\newtheorem{definition}{Definition}
\newtheorem{theorem}{Theorem}
\newtheorem{lemma}{Lemma}
\newtheorem{corollary}{Corollary}
\newtheorem{proposition}{Proposition}
\newcommand{\bra}[1]{\ensuremath{\langle #1 |}}
\newcommand{\ket}[1]{\ensuremath{| #1 \rangle}}
\newcommand{\braket}[2]{\ensuremath{\langle #1 | #2 \rangle}}
\newcommand{\ketbra}[2]{\ensuremath{| #1 \rangle \langle #2 |}}
\newcommand{\F}{\mathbb{F}}
\newcommand{\X}{\sigma_1}
\newcommand{\Z}{\sigma_3}
\newcommand{\Y}{\sigma_2}
\newcommand{\I}{\mathbf{1}}
\newcommand{\ind}[1]{\I \left( #1 \right)}
\newcommand{\C}{\mathbb{C}}
\newcommand{\functiong}{g_{B_n(0,r)}}
\newcommand{\functionf}{f_{B_n(0,r)}}
\newcommand{\colvec}[2]{\ensuremath{\begin{pmatrix}
			#1 \\ #2
\end{pmatrix}}}
\newcommand{\colvecthree}[3]{\ensuremath{\begin{pmatrix}
			#1 \\ #2 \\ #3
\end{pmatrix}}}
\newcommand{\rowvec}[2]{\ensuremath{\begin{pmatrix}
			#1 & #2
\end{pmatrix}}}
\begin{document}
\title{QKD parameter estimation by two-universal hashing}
\author{Dimiter Ostrev}
\orcid{0000-0002-4098-0969}
\affiliation{Institute of Communications and Navigation, German Aerospace Center, Oberpfaffenhofen, 82234 Weßling, Germany
}
\thanks{This work was performed while the author was at the Interdisciplinary Centre for Security
Reliability and Trust, University of Luxembourg,
Esch-sur-Alzette, L-4364, Luxembourg. This work is supported by the Luxembourg National Research Fund, under CORE project Q-CoDe (CORE17/IS/11689058). }
\date{}
\maketitle

\begin{abstract}
	This paper proposes and proves security of a QKD protocol which uses two-universal hashing instead of random sampling to estimate the number of bit flip and phase flip errors. This protocol dramatically outperforms previous QKD protocols for small block sizes. More generally, for the two-universal hashing QKD protocol, the difference between asymptotic and finite key rate decreases with the number $n$ of qubits as $cn^{-1}$, where $c$ depends on the security parameter. For comparison, the same difference decreases no faster than $c'n^{-1/3}$ for an optimized protocol that uses random sampling and has the same asymptotic rate, where $c'$ depends on the security parameter and the error rate. 
\end{abstract}

\section{Introduction}

\subsection{Motivation}

Quantum Key Distribution allows two users, Alice and Bob, to agree on a shared secret key using an authenticated classical channel and a completely insecure quantum channel. There are information theoretic security proofs for QKD protocols (for example \cite{shor2000simple,renner2005security,koashi2009simple,fung2010practical,bouman2010sampling,tomamichel2012tight,leverrier2017largely} among many others). Quantum key distribution has also been realized experimentally and is commercially available. The rare combination of information theoretic security and practical achievability has attracted considerable attention to QKD. 

A QKD protocol has several important parameters:
\begin{enumerate}
\item Block size: the number of pairs of qubits that Alice and Bob receive. Following \cite[Part 1]{leverrier2017largely}, this paper considers entanglement based protocols and defines the block size as the number of qubits after sifting. 
\item Output size: the number of bits of secret key that the protocol produces. 
\item Key rate: the ratio of output size to block size. The higher the key rate is, the more efficiently the protocol converts the available quantum resource to a secret key. 
\item Security level: the distance of the output from an ideal secret key. The lower the security level, the better the guarantee that no future evolution of the protocol output and adversary registers will be able to distinguish between the output and an ideal key. 
\item Robustness: the amount and type of noise that the protocol can tolerate without aborting. In particular, the QKD protocol should be able to tolerate at the very least the imperfections of whatever quantum channel and entanglement source are used to implement the protocol.  
\end{enumerate}

Existing QKD protocols and security proofs exhibit trade-offs between these parameters: improving the security or robustness of the protocol worsens the key rate. These trade-offs are particularly severe when the block size is small. The phenomenon that the key rate of a QKD protocol deteriorates significantly for small block sizes has been called finite size effect \cite[Sections II-C and IX]{pirandola2020advances}.

The finite size effect has practical consequences in cases when the distribution of entangled quantum states is particularly difficult. As an example, consider the problem of using QKD between users who are far apart on the surface of the earth. The Micius satellite experiment \cite{yin2020entanglement} tried to solve this problem by using a satellite to distribute entangled photon pairs to two ground stations that are 1120km apart. However, sending entangled photon pairs from space to earth is difficult. In the Micius experiment, several nights of good weather had to pass until the ground stations accumulated sifted block size 3100. The error rate that the ground stations needed to tolerate was $4.51\%$. Reference \cite{lim2021security} performed a state-of-the-art security analysis on this data, and concluded that security levels better than around $10^{-6}$ lead to no secret key at all, while at security level $10^{-6}$, only six bits of secret key are extracted. 

The output size and security level achieved in this example are not sufficient for cryptographic applications. This provides the motivation for the present paper. Are there QKD protocols and security proofs that achieve a combination of small block size with output size and security level sufficient for cryptographic applications?

\subsection{Contributions} 

This paper presents the two-universal hashing QKD protocol and proves its security. The two-universal hashing QKD protocol is an entanglement based protocol with block size $n$, that can tolerate any combination of up to $r$ bit flip errors and up to $r$ phase flip errors, and at the end extract $n - 2 \lceil n h(r/n) + 2\log_2(1/\epsilon) + 5 \rceil $ secret key bits, that are $\epsilon$ close to an ideal secret key. 

For small block sizes, the two-universal hashing QKD protocol dramatically outperforms protocols of the BBM92 type. To illustrate, consider again the security analysis developed in the sequence of papers \cite{tomamichel2012tight,leverrier2017largely,lim2021security} applied to the Micius satellite example. 
\begin{enumerate}
\item Fix the tolerated error rate at $4.51\%$, the security level at $10^{-6}$ and the output size at 6 bits. The BBM92 type protocol with the security proof developed in \cite{tomamichel2012tight,leverrier2017largely,lim2021security} requires block size 3100. The two-universal hashing protocol requires block size 200. 
\item Fix the block size at 3100 and fix the error rate at $4.51\%$. The BBM92 type protocol with the security proof developed in \cite{tomamichel2012tight,leverrier2017largely,lim2021security} can extract $6$ secret key bits with security level $10^{-6}$. The two universal hashing protocol can extract $385$ secret key bits with security level $10^{-80}$. 
\end{enumerate}

The advantage of the two-universal hashing QKD protocol is particularly noticeable for small block sizes; however, it is not limited to them. For fixed error rate $\delta=r/n$ and fixed security parameter $\epsilon$, the asymptotic rate of this protocol is $1-2h(\delta)$, and the deviation of finite from asymptotic rate is between $(4 \log_2(1/\epsilon)+10)/n$ and $(4 \log_2(1/\epsilon)+12)/n$. By contrast, the deviation of finite from asymptotic key rate for the BBM92 type protocol with the security proof \cite{tomamichel2012tight,leverrier2017largely,lim2021security} is of the form $c n^{-1/3}$, where $c$ depends on the tolerated error rate and the security level. A discussion of the reasons for the difference in parameter trade-offs follows. 

In the BBM92 type protocol, a random subset of $n_{pe}$ positions is measured and the outcomes are publicly compared. If the error rate on this subset is below some threshold $\delta$, then parameter estimation accepts and outputs the promise that the error rate on the remaining positions is at most $\delta+\nu$, where $\nu$ is the gap between observed and inferred error rate. The failure probability for parameter estimation scales roughly as $exp(-4n_{pe}v^2)$. To get a sense of this scaling, suppose that the target failure probability is $e^{-100}$ and that the target gap is $\nu=0.01$. Then, $n_{pe}$ has to be chosen to be $250000$, clearly orders of magnitude more than can be afforded for block sizes around 1000 or 10000. Now, consider the rest of the protocol. Information reconciliation and privacy amplification have to operate with the promise that the error rate is at most $\delta+\nu$. Thus, to extract a secret key, information reconciliation and privacy amplification have to sacrifice a substantial number of positions beyond the initial $n_{pe}$ sacrificed for parameter estimation. 

By contrast, in the two-universal hashing protocol, $2k$ ebits are sacrificed for parameter estimation. If the test passes, then Alice and Bob know that the post-parameter-estimation state is a particular Bell state of $n-2k$ ebits; thus, Alice and Bob do not need to sacrifice any further ebits for information reconciliation and privacy amplification. Moreover, the scaling of the failure probability for parameter estimation with the number of sacrificed ebits does not have the $\nu^2$ coefficient in front of the number of sacrificed ebits. 

\subsection{Related work}

The present paper builds on a number of previous ideas. 

In classical information theory, random linear functions have been used to obtain ensembles of error correcting codes since the 1950s: see for example \cite[Section 2.1]{gallager1963low} where the idea was attributed to Elias \cite{elias1955coding2}. Random linear functions are also a special case of two-universal hash functions and can be used to authenticate classical messages \cite{carter1979universal}. Further, \cite[Theorem 6]{brassard1993secret}, \cite[Section 6.3.2]{renner2005security} observed that two-universal hash functions can be used to achieve information reconciliation with minimum leakage.

In quantum information theory, \cite{bennett1996mixed} used a variant of linear two-universal hash functions to perform entanglement purification. \cite{lo1999unconditional} applied the technique of \cite{bennett1996mixed} to construct an LOCC protocol by which Alice and Bob can verify that a state they received from the adversary was in fact $n$ perfect EPR pairs. \cite{shor2000simple} observed that when parameter estimation has already been performed by random sampling, arguments related to quantum CSS codes \cite{calderbank1996good,steane1996multiple} can be used to prove security of a QKD protocol. \cite{koashi2009simple} presented an interesting generalization of the proof technique of \cite{shor2000simple} that works also in the case of imperfect devices. \cite{bouman2010sampling} translated the guarantees of classical random sampling to the quantum case and used this to obtain a QKD security proof. \cite{tomamichel2012tight,fung2010practical} focused on the performance of QKD for small block sizes, and optimized their protocols by using random sampling to estimate the number of errors in only one of the measurement bases, while using a two-universal hash in the information reconciliation phase to ensure correctness. \cite{leverrier2017largely} developed the proof idea of \cite{tomamichel2012tight} with much greater mathematical rigour. \cite{lim2021security} proved a better random sampling tail bound and thus obtained better parameter trade-offs than \cite{leverrier2017largely}. 

From the references above, closest to the present paper is \cite{lo1999unconditional}. The current paper develops further the ideas in \cite{lo1999unconditional} in the following ways: 
\begin{enumerate}
\item Some mathematical details in the proof of \cite{lo1999unconditional} were skipped, other details were entrusted to the papers on stabilizer codes and entanglement purification. Further, \cite{lo1999unconditional} did not discuss composable security\footnote{Frameworks for composable security were not yet invented at the time of \cite{lo1999unconditional}.} and did not give any explicit bounds on the achievable parameters for specific finite block sizes. The present paper gives a detailed, rigorous and self-contained proof of composable security, and gives explicit formulas for the achievable parameters at any finite block size. 
\item \cite{lo1999unconditional} proposed a QKD protocol that employed full error correction with a stabilizer error correcting code, followed by their verification subroutine; thus the quantum phase of their protocol required the ability to implement stabilizer error correcting codes. The present paper relies on quantum CSS codes to simplify the quantum phase of the protocol as much as possible.  
\end{enumerate}

\subsection{Structure of the paper}

The rest of this paper is structured as follows: Section \ref{sec:Preliminaries} introduces material that is needed to present and prove the security of the two universal hashing QKD protocol, including the security and robustness criteria for QKD protocols, a number of useful lemmas related to the stabilizer formalism, and a number of useful lemmas about two-universal hashing and random matrices over the field with two elements. Section \ref{sec:QKDProtocolAndItsSecurity} presents the two-universal hashing QKD protocol and shows that it is secure and robust. Section \ref{sec:ComparisonWithPreviousWork} shows that for fixed security level and tolerated error rate, the finite key rate converges to the asymptotic rate as $cn^{-1}$ for two-universal hashing and as $c'n^{-1/3}$ for the BBM92 protocol with the security proof developed in \cite{tomamichel2012tight,leverrier2017largely,lim2021security}. Section \ref{sec:Conclusion} concludes and gives some open problems. 

\section{Preliminaries}\label{sec:Preliminaries}

This section presents definitions and results that are used to state and prove the main result on the security and robustness of the two-universal hashing protocol. Subsection \ref{sec:SecurityCriterion} recalls the standard security criterion for QKD. Then, subsection \ref{subsec:ThePauliGroupAndTheBellBasis} contains a number of lemmas related to the stabilizer formalism; these are used during the security proof. Finally, subsection \ref{sec:ComputingCertainFunctions} contains lemmas related to two-universal hashing. Subsection \ref{sec:ComputingCertainFunctions} also discusses an application of two-universal hashing to approximately compute certain functions from partial information about the input; this is used during the security proof. 

\subsection{Security and robustness of quantum key distribution}\label{sec:SecurityCriterion}

This section recalls the security and robustness criteria from \cite{renner2005security} that ensure that the key produced by QKD can be used in any application. See \cite{portmann2014cryptographic} for a proof of the equivalence of this security criterion and security in the Abstract Cryptography framework for composable security. 

As is common in the QKD literature, this paper assumes that the adversary Eve is active in the quantum phase of the protocol but remains passive during the classical phase, i.e. Eve eavesdrops the classical communication but does not attempt to modify or block it. Under this assumption, an entanglement-based QKD protocol is a completely positive trace preserving map that transforms input states $\rho_{ABE}$ of Alice, Bob and Eve into output states $\tilde{\rho}_{W_AW_BCE}$, where $W_A,W_B$ are registers containing Alice and Bob's output: a secret key or indication $\bot$ of protocol abort, and where $C$ is a register containing a transcript of the classical communication between Alice and Bob. 

Since registers $W_A,W_B$ contain classical values, the final state $\tilde{\rho}_{W_AW_BCE}$ can be decomposed as 
\begin{equation}
\tilde{\rho}_{W_AW_BCE} = \ketbra{\bot \bot}{\bot \bot}_{W_AW_B} \otimes \tilde{\rho}_{CE}(\bot) + \sum_{w_A, w_B} \ketbra{w_Aw_B}{w_Aw_B}_{W_AW_B} \otimes \tilde{\rho}_{CE}(w_A,w_B)
\end{equation}
This decomposition is used to formulate the definition of security:
\begin{definition}\label{def:EpsilonSecurity}
A QKD protocol is $\epsilon$ secure if for all input states $\rho_{ABE}$, the output state $\tilde{\rho}_{W_AW_BCE}$ is $\epsilon$-close in trace distance to the corresponding ideal state \begin{equation} \ketbra{\bot \bot}{\bot \bot}_{W_AW_B} \otimes \tilde{\rho}_{CE}(\bot) + \sum_w \frac{1}{|W|}\ketbra{ww}{ww}_{W_AW_B} \otimes (\tilde{\rho}_{CE} - \tilde{\rho}_{CE}(\bot)) \end{equation} where $|W|$ denotes the size of the secret key space. 
\end{definition}

Alternatively, $\epsilon$-security can be further subdivided into requirements for secrecy and correctness: 
\begin{definition}\label{def:EpsilonCorrectness}
A QKD protocol is $\epsilon$ correct if for all input states $\rho_{ABE}$, the probability \begin{equation} Pr(W_A \neq W_B) = \sum_{w_A \neq w_B} Tr(\tilde{\rho}_{CE}(w_A,w_B)) \end{equation} that Alice and Bob accept and output different keys is bounded by $\epsilon$. 
\end{definition}

\begin{definition}
Alice's key is $\epsilon$ secret if for all input states $\rho_{ABE}$, the reduced output state $\tilde{\rho}_{W_ACE}$ is $\epsilon$-close in trace distance to the corresponding ideal state \begin{equation} \ketbra{\bot}{\bot}_{W_A} \otimes \tilde{\rho}_{CE}(\bot) + \sum_w \frac{1}{|W|} \ketbra{w}{w}_{W_A} \otimes (\tilde{\rho}_{CE} - \tilde{\rho}_{CE}(\bot)) \end{equation}
\end{definition}

The following lemma establishes the relation between security and correctness plus secrecy:
\begin{lemma}\label{lemma:RelationBetweenSecurityAndCorrectnessPlusSecrecy}
If a QKD protocol is $\epsilon$ secure, then it is $\epsilon$ correct and Alice's key is $\epsilon$ secret. Conversely, if the protocol is $\epsilon$ correct and Alice's key is $\delta$ secret, then the protocol is $\epsilon + \delta$ secure. 
\end{lemma}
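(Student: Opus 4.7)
\emph{Forward direction.} I would extract correctness and secrecy separately from security. For correctness, observe that the ideal state in Definition \ref{def:EpsilonSecurity} is supported entirely on configurations with $W_A=W_B$, so the event $W_A\neq W_B$ has probability zero under it; since trace distance between classical--quantum states upper-bounds the difference of probabilities of any classical event on the classical registers, this event has probability at most $\epsilon$ under $\tilde{\rho}_{W_AW_BCE}$, yielding $\epsilon$-correctness. For secrecy, tracing out the $W_B$ register from the ideal full state of Definition \ref{def:EpsilonSecurity} yields exactly the ideal Alice state appearing in the secrecy definition, and trace distance is non-increasing under partial trace.

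\emph{Converse.} The plan is to insert an intermediate state and apply the triangle inequality. Let
\begin{equation}
\sigma'_{W_AW_BCE} = \ketbra{\bot\bot}{\bot\bot}\otimes\tilde{\rho}_{CE}(\bot) + \sum_{w_A}\ketbra{w_Aw_A}{w_Aw_A}\otimes\sum_{w_B}\tilde{\rho}_{CE}(w_A,w_B)
\end{equation}
be the state obtained from $\tilde{\rho}_{W_ACE}$ by copying the $W_A$ register into a fresh $W_B$ register. A block-by-block computation of $\tilde{\rho}_{W_AW_BCE}-\sigma'_{W_AW_BCE}$ in the $\{\ket{w_Aw_B}\}$ basis shows that each $w_A\neq w_B$ block contributes $+\tilde{\rho}_{CE}(w_A,w_B)$ while each $w_A=w_B$ block picks up a compensating negative correction of the same total trace; summing absolute values of these pairwise-orthogonal positive blocks yields trace distance exactly $\Pr(W_A\neq W_B)\leq\epsilon$ by $\epsilon$-correctness.

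For the second leg, observe that $\sigma'_{W_AW_BCE}$ is the image of $\tilde{\rho}_{W_ACE}$ under the CPTP map that copies $W_A$ into a fresh $W_B$, and the ideal full state of Definition \ref{def:EpsilonSecurity} is the image of the ideal Alice state under the same map. Since partial trace over $W_B$ inverts this copy, it preserves trace distance, so $\sigma'_{W_AW_BCE}$ is at trace distance at most $\delta$ from the ideal full state by secrecy. The triangle inequality then gives the $\epsilon+\delta$ bound. The only point requiring mild care is the block-by-block bookkeeping in the first leg; the rest is routine manipulation of partial traces and isometric embeddings.
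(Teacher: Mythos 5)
Your proposal is correct and follows essentially the same route as the paper: the forward direction via monotonicity of trace distance and its interpretation as a bound on differences of event probabilities, and the converse via the triangle inequality through exactly the hybrid state $\ketbra{\bot\bot}{\bot\bot}\otimes\tilde{\rho}_{CE}(\bot)+\sum_{w}\ketbra{ww}{ww}\otimes\sum_{w'}\tilde{\rho}_{CE}(w,w')$ that the paper uses. Your block-by-block computation giving distance exactly $\Pr(W_A\neq W_B)$ to the hybrid, and the observation that the classical copy map (inverted by tracing out $W_B$) preserves trace distance, are just the details the paper leaves implicit.
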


\begin{proof}
The forward direction follows from monotonicity of the trace distance and its interpretation as distinguishing advantage. The reverse direction follows by considering the hybrid state \begin{equation} \ketbra{\bot \bot}{\bot \bot}_{W_AW_B} \otimes \tilde{\rho}_{CE}(\bot) + \sum_{w} \ketbra{ww}{ww}_{W_AW_B} \otimes \sum_{w'}\tilde{\rho}_{CE}(w,w')  \end{equation} and the triangle inequality.
\end{proof}

Next, note that in the standard definition of QKD security (Definition \ref{def:EpsilonSecurity}) the ideal state, beyond being $\epsilon$ close to the real state, satisfies the following additional conditions:
\begin{enumerate}
\item The probabilities of accepting and rejecting are the same for the real and ideal state. 
\item The real and ideal state differ only in the accept case. 
\item The sub-normalized reduced density matrix of registers $C,E$ in the accept case is equal to $\tilde{\rho}_{CE} - \tilde{\rho}_{CE}(\bot)$ for both the real and the ideal state. 
\end{enumerate}
Now, suppose an ideal state is found that is $\epsilon$ close to the real state, but which does not necessarily satisfy these additional conditions. This suffices to demonstrate security:
\begin{lemma}\label{lemma:RelaxingImplicitAssumptionsInEpsilonSecurity}
Suppose that for all input states $\rho_{ABE}$, there exist positive $\sigma_{CE}^{accept}$ and $\sigma_{CE}^{reject}$ such that $Tr(\sigma_{CE}^{accept}) + Tr(\sigma_{CE}^{reject})=1$ and such that the output state $\tilde{\rho}_{W_AW_BCE}$ is $\epsilon$-close in trace distance to \begin{equation} \ketbra{\bot\bot}{\bot\bot}_{W_AW_B}\otimes\sigma_{CE}^{reject} + \sum_{w}\frac{1}{|W|} \ketbra{ww}{ww}_{W_AW_B} \otimes \sigma_{CE}^{accept} \end{equation} Then, the protocol is $2\epsilon$ secure. 
\end{lemma}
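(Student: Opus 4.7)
The plan is to apply the triangle inequality with the hypothesized state
$\sigma := \ketbra{\bot\bot}{\bot\bot}_{W_AW_B}\otimes\sigma_{CE}^{reject} + \sum_w \frac{1}{|W|}\ketbra{ww}{ww}_{W_AW_B}\otimes\sigma_{CE}^{accept}$
as an intermediate between the real output $\tilde{\rho}_{W_AW_BCE}$ and the canonical ideal state
$\tau := \ketbra{\bot\bot}{\bot\bot}_{W_AW_B}\otimes\tilde{\rho}_{CE}(\bot) + \sum_w \frac{1}{|W|}\ketbra{ww}{ww}_{W_AW_B}\otimes(\tilde{\rho}_{CE}-\tilde{\rho}_{CE}(\bot))$
from Definition \ref{def:EpsilonSecurity}. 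The first leg $\|\tilde{\rho}-\sigma\|_1 \leq \epsilon$ is the hypothesis itself, so all that remains is to argue $\|\sigma-\tau\|_1 \leq \epsilon$.

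For the second leg I would exploit that $\sigma$ and $\tau$ are both block-diagonal in the basis $\{\ket{\bot\bot}\} \cup \{\ket{ww}\}_{w\in W}$ of $W_AW_B$, and differ only in the conditional $CE$-operator attached to each block. By orthogonality of these blocks, together with the fact that the $|W|$ diagonal accept blocks each carry the same conditional operator with coefficient $1/|W|$, the trace distance collapses to $\|\sigma-\tau\|_1 = \|\sigma_{CE}^{reject}-\tilde{\rho}_{CE}(\bot)\|_1 + \|\sigma_{CE}^{accept}-(\tilde{\rho}_{CE}-\tilde{\rho}_{CE}(\bot))\|_1$.

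To bound the same quantity from $\|\tilde{\rho}-\sigma\|_1$, I would consider the CPTP map that measures $W_AW_B$ in its computational basis and outputs a single classical bit equal to $0$ if the outcome is $\bot\bot$ and $1$ otherwise. Under this coarse-graining, $\sigma$ becomes $\ketbra{0}{0}\otimes\sigma_{CE}^{reject} + \ketbra{1}{1}\otimes\sigma_{CE}^{accept}$ (the $1/|W|$ factors summing over $w$) and $\tilde{\rho}$ becomes $\ketbra{0}{0}\otimes\tilde{\rho}_{CE}(\bot) + \ketbra{1}{1}\otimes(\tilde{\rho}_{CE}-\tilde{\rho}_{CE}(\bot))$, by the decomposition of $\tilde{\rho}_{W_AW_BCE}$ given earlier in the section. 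Applying block-orthogonality in the image together with monotonicity of the trace distance under CPTP maps then gives $\|\sigma_{CE}^{reject}-\tilde{\rho}_{CE}(\bot)\|_1 + \|\sigma_{CE}^{accept}-(\tilde{\rho}_{CE}-\tilde{\rho}_{CE}(\bot))\|_1 \leq \|\sigma-\tilde{\rho}\|_1 \leq \epsilon$, i.e., $\|\sigma-\tau\|_1 \leq \epsilon$. Combining the two legs yields $\|\tilde{\rho}-\tau\|_1 \leq 2\epsilon$, which is $2\epsilon$-security in the sense of Definition \ref{def:EpsilonSecurity}.

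There is no serious obstacle in this argument; it is essentially pure bookkeeping built from the triangle inequality and monotonicity of the trace distance, with no new technical ingredient required. The only mildly delicate point is keeping track of the $1/|W|$ weights and verifying that the coarse-graining map is applied consistently to both $\sigma$ and $\tilde{\rho}$, so that the block decompositions of the images line up exactly.
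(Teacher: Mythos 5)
Your proof is correct and follows essentially the same route as the paper: both pass through the hypothesized state $\sigma$ via the triangle inequality and reduce the comparison of $\sigma$ with the canonical ideal state to the block-wise differences $\|\sigma_{CE}^{reject}-\tilde{\rho}_{CE}(\bot)\|_1$ and $\|\sigma_{CE}^{accept}-(\tilde{\rho}_{CE}-\tilde{\rho}_{CE}(\bot))\|_1$. The only cosmetic difference is that where you invoke monotonicity under a coarse-graining channel to bound these block differences by the real-to-$\sigma$ distance, the paper applies the triangle inequality directly to the sum over $(w_A,w_B)$ blocks; the two steps are interchangeable.
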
 

\begin{proof}
By assumption, 
\begin{equation} 
\frac{1}{2}\|\tilde{\rho}_{CE}(\bot) - \sigma_{CE}^{reject}\|_1 \\ + \frac{1}{2} \sum_{w_A,w_B} \left\| \tilde{\rho}_{CE}(w_A,w_B) - \frac{1}{|W|}\mathbf{1}(w_A=w_B)\sigma_{CE}^{accept} \right\|_1 \leq \epsilon 
\end{equation} 
From the triangle inequality it follows that 
\begin{equation} 
\frac{1}{2}\left(\|\tilde{\rho}_{CE}(\bot) - \sigma_{CE}^{reject}\|_1 +  \left\| \tilde{\rho}_{CE}-\tilde{\rho}_{CE}(\bot) -\sigma_{CE}^{accept} \right\|_1\right) \leq \epsilon 
\end{equation} 
The lemma then follows by another application of the triangle inequality. 
\end{proof}

Finally, note that a protocol that always aborts is secure, but not useful. For a useful QKD protocol, the probability of acceptance is bounded below by $1-\delta$ for some $\delta \in (0,1)$ on a suitable class of input states. In the present paper, robustness of the two-universal hashing protocol is shown by giving explicit bounds on the probability of acceptance as a function of the input state. 

\subsection{The Pauli group and the Bell basis}\label{subsec:ThePauliGroupAndTheBellBasis}

This section presents a number of useful lemmas related to the stabilizer formalism \cite{gottesman1996class,calderbank1997quantum}.

Denote the Pauli matrices by \begin{equation} \X = \begin{pmatrix}
0 & 1 \\ 1 & 0
\end{pmatrix}, \;\;\; \Y = \begin{pmatrix}
0 & -i \\ i & 0
\end{pmatrix}, \;\;\;  \Z = \begin{pmatrix}
1 & 0 \\ 0 & -1
\end{pmatrix} \end{equation}

Let $\F_2$ denote the field with two elements, $\F_2^n$ denote the $n$-dimensional vector space over this field, and $\F_2^{n \times m}$ denote the space of $n$ by $m$ matrices over $\F_2$. For a row vector $u \in \F_2^{1 \times n}$, denote \begin{equation} \X^u = \X^{u_1} \otimes \dots \X^{u_n}, \;\;\; \Z^u = \Z^{u_1} \otimes \dots \otimes \Z^{u_n} \end{equation}

The Pauli group on $n$ qubits is \begin{equation}  G_n = \{\omega \X^u \Z^v: \omega \in \{\pm1, \pm i\}, u,v \in \F_2^{1 \times n}\} \end{equation} Matrix multiplication of elements of $G_n$ can be performed in terms of $u,v,\omega$: \begin{equation} (\omega \X^u \Z^v)(\omega' \X^{u'} \Z^{v'}) = \omega\omega'(-1)^{v \cdot u'} \X^{u+u'} \Z^{v+v'} \end{equation} This also shows that the map $\mathcal{F} : G_n \rightarrow \F_2^{1 \times 2n}$ given by \begin{equation} \mathcal{F}(\omega \X^u \Z^v) = \rowvec{u}{v} \end{equation} is a group homomorphism. 

Any element of the Pauli group squares to either $I$ or $-I$; any two elements $g,g'$ of the Pauli group satisfy \begin{equation} g g' = (-1)^{\mathcal{F}(g) \mathcal{S} \mathcal{F}(g')^T} g'g\end{equation}  where $\mathcal{S} \in \F_2^{2n \times 2n}$ is the matrix with block form \begin{equation} \mathcal{S} = \begin{pmatrix}
0 & I_n \\ I_n & 0
\end{pmatrix} \end{equation}

Say that a tuple of elements of the Pauli group \begin{equation} \vec{g}= \colvecthree{g_1}{\vdots}{g_m} \end{equation} is independent if the row vectors $\mathcal{F}(g_i) \in \F_2^{1 \times 2n}$  are linearly independent. Given such an independent tuple and given any $x \in \F_2^m$, it is possible to find $g \in G_n$ such that \begin{equation} \forall i, \;\; g g_i = (-1)^{x_i} g_i g \end{equation} by solving the corresponding linear system of equations over $\F_2$.

A tuple  of independent commuting self-adjoint elements of the Pauli group $\vec{g} = (g_1, \dots g_m)^T$ defines a projective measurement on its joint eigenspaces. The measurement outcomes can be indexed by $x \in \F_2^m$ and the corresponding projections are given by \begin{equation} P(\vec{g}, x) = 2^{-m} \prod_{j=1}^m (I + (-1)^{x_j} g_j) \end{equation} The projections $P(\vec{g},x)$ form a complete set of orthogonal projections. The elements of the Pauli group map these projections to each other under conjugation, as can be seen from Lemma \ref{lemma:ActionOfPauliGroupOnMeasurementProjections} below. Therefore, the projections $P(\vec{g},x)$ all have the same rank $2^{n-m}$. 

\begin{lemma}\label{lemma:ActionOfPauliGroupOnMeasurementProjections}
	For all tuples $\vec{g}= (g_1 \dots g_m)^T $ of independent commuting self-adjoint elements of $G_n$, for all $h \in G_n$, for all $x \in \F_2^m$, \begin{equation} P(\vec{g}, x) h = h P(\vec{g}, x + \mathcal{F}(\vec{g}) \mathcal{S} \mathcal{F}(h)^T) \end{equation} where \begin{equation} \mathcal{F}(\vec{g}) = \colvecthree{\mathcal{F}(g_1)}{\vdots}{\mathcal{F}(g_m)}\end{equation} is the matrix with rows $\mathcal{F}(g_1), \dots, \mathcal{F}(g_m)$. 
\end{lemma}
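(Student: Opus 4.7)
The plan is to push $h$ through the product defining $P(\vec{g},x)$ one factor at a time, using the commutation relation of the Pauli group that was recorded just above the lemma, namely $g_j h = (-1)^{\mathcal{F}(g_j) \mathcal{S} \mathcal{F}(h)^T} h g_j$.

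First I would establish the single-factor identity
\begin{equation}
(I + (-1)^{x_j} g_j)\, h \;=\; h\, \bigl(I + (-1)^{x_j + \mathcal{F}(g_j)\mathcal{S}\mathcal{F}(h)^T} g_j\bigr),
\end{equation}
which follows immediately by distributing, applying the commutation relation to the $g_j h$ term, and factoring $h$ out on the left. In other words, moving $h$ past the factor indexed by $j$ leaves the factor intact except that the sign bit $x_j$ is shifted by the scalar $\mathcal{F}(g_j)\mathcal{S}\mathcal{F}(h)^T \in \F_2$.

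Next I would iterate this identity across all $m$ factors. Since the $g_j$'s are assumed to commute pairwise (and therefore the factors $(I + (-1)^{x_j} g_j)$ also commute), the order in which $h$ is pushed through does not matter, and each factor picks up its own independent shift. The resulting shifted sign-vector has $j$-th component $x_j + \mathcal{F}(g_j)\mathcal{S}\mathcal{F}(h)^T$, which is exactly the $j$-th component of $x + \mathcal{F}(\vec{g})\mathcal{S}\mathcal{F}(h)^T$ by the definition of $\mathcal{F}(\vec{g})$ as the matrix whose $j$-th row is $\mathcal{F}(g_j)$. Pulling the overall prefactor $2^{-m}$ through then yields the claimed identity $P(\vec{g},x)\, h = h\, P(\vec{g}, x + \mathcal{F}(\vec{g})\mathcal{S}\mathcal{F}(h)^T)$.

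There is essentially no obstacle here beyond careful bookkeeping; the only point that requires a line of justification is that the factors $(I + (-1)^{x_j} g_j)$ commute with one another, so the shifts accumulate independently without any cross-terms. Everything else is a direct unfolding of the definitions and of the Pauli commutation relation.
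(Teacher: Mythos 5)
Your proof is correct and follows essentially the same route as the paper's: the paper's one-line computation is exactly the factor-by-factor push of $h$ through the product $\prod_j (I + (-1)^{x_j} g_j)$ using the relation $g_j h = (-1)^{\mathcal{F}(g_j)\mathcal{S}\mathcal{F}(h)^T} h g_j$, with each factor's sign bit shifted by $\mathcal{F}(g_j)\mathcal{S}\mathcal{F}(h)^T$. Your write-up merely makes explicit the single-factor identity and the (harmless) commutativity bookkeeping that the paper leaves implicit.
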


\begin{proof}
	\begin{multline}
	P(\vec{g},x)h = 2^{-m} \left(\prod_{j=1}^m (I + (-1)^{x_j} g_j) \right) h \\
	= 2^{-m} h \left( \prod_{j=1}^m (I + (-1)^{x_j + \mathcal{F}(g_j) \mathcal{S} \mathcal{F}(h)^T}g_j)\right) = hP(\vec{g}, x + \mathcal{F}(\vec{g}) \mathcal{S} \mathcal{F}(h)^T)
	\end{multline}
\end{proof}

Now, take a tuple $\vec{g}$ of $m$ independent commuting self-adjoint elements, take $k \leq m$ and take a full rank matrix $L \in \F_2^{k \times m}$. The matrix $L$ transforms the tuple $\vec{g}$ to the $k$-tuple \begin{equation} L \vec{g} = L \colvecthree{g_1}{\vdots}{g_m} = \colvecthree{\prod_{j=1}^m g_j^{L_{1j}}}{\vdots}{\prod_{j=1}^m g_j^{L_{kj}}}  \end{equation} The tuple $L\vec{g}$ also consists of independent commuting self-adjoint elements. The transformation of $\vec{g}$ to $L \vec{g}$ satisfies \begin{equation} M (L\vec{g} )  = (ML)\vec{g} \end{equation} for any $\vec{g}$, $L$, $M$ of compatible size. The matrix $\mathcal{F}(L\vec{g})$ can be expressed in terms of the matrix $\mathcal{F}(\vec{g})$:
\begin{equation}
\mathcal{F}(L\vec{g}) = \colvecthree{\mathcal{F}(\prod_{j=1}^m g_j^{L_{1j}})}{\vdots}{\mathcal{F}(\prod_{j=1}^m g_j^{L_{kj}})} 
= \colvecthree{\sum_{j=1}^m L_{1j}\mathcal{F}(g_j)}{\vdots}{\sum_{j=1}^m L_{kj} \mathcal{F}(g_j)} = L \mathcal{F}(\vec{g})
\end{equation}

The measurement projections of $L\vec{g}$ can be expressed in terms of the measurement projections of $\vec{g}$.

\begin{lemma}\label{lemma:RelationOfMeasurementProjectionsForTwoLinearlyRelatedTuples}
	For all $n \geq m \geq k \geq 1$, for all tuples $\vec{g}$ of $m$ independent commuting self-adjoint elements of $G_n$,  for all full rank $L \in \F_2^{k \times m}$, for all $y \in \F_2^k$, \begin{equation} P(L\vec{g},y) = \sum_{x \in \F_2^m:Lx=y} P(\vec{g},x) \end{equation}
\end{lemma}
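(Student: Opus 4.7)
The plan is to compute both sides directly by expanding the defining product formula for the measurement projections and comparing Fourier/character expansions over $\F_2$. The key algebraic identity is
\begin{equation}
\prod_{j=1}^m (I + (-1)^{x_j} g_j) = \sum_{a \in \F_2^m} (-1)^{x \cdot a}\, g_1^{a_1} \cdots g_m^{a_m},
\end{equation}
which is well-defined and valid because the $g_j$ commute and each squares to $I$ (since they are commuting self-adjoint Paulis). I would apply this identity on both sides.

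For the right-hand side, I would interchange the sum over $x$ with the sum over $a$ to get
\begin{equation}
\sum_{x:Lx=y} P(\vec{g},x) = 2^{-m} \sum_{a \in \F_2^m} \left(\sum_{x:Lx=y} (-1)^{x \cdot a}\right) g_1^{a_1} \cdots g_m^{a_m}.
\end{equation}
The inner character sum is standard: writing $\{x:Lx=y\} = x_0 + \ker L$ for any particular solution $x_0$, the sum equals $2^{m-k}(-1)^{x_0 \cdot a}$ when $a$ lies in $(\ker L)^\perp$, i.e.\ when $a = L^T b$ for some $b \in \F_2^k$, and vanishes otherwise. Since $L$ has full rank $k$, such a $b$ is unique, and $x_0 \cdot (L^T b) = (Lx_0) \cdot b = y \cdot b$, so the sum collapses to a sum over $b \in \F_2^k$.

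For the left-hand side, I would apply the same expansion identity to $P(L\vec{g}, y) = 2^{-k} \prod_{i=1}^k (I + (-1)^{y_i}(L\vec{g})_i)$, giving
\begin{equation}
P(L\vec{g},y) = 2^{-k} \sum_{b \in \F_2^k} (-1)^{y \cdot b} (L\vec{g})_1^{b_1} \cdots (L\vec{g})_k^{b_k}.
\end{equation}
The two sides will then match once I regroup the monomial $g_1^{(L^T b)_1} \cdots g_m^{(L^T b)_m}$ appearing on the right as $\prod_{i=1}^k (L\vec{g})_i^{b_i}$, which is legitimate because all the $g_j$ commute so exponents can be collected freely.

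The only place any subtlety arises is this last regrouping step and the character-sum computation; everything else is bookkeeping. The full-rank hypothesis on $L$ enters exactly in guaranteeing the bijection between $a \in \mathrm{row}(L)$ and $b \in \F_2^k$, without which one would get extra factors. As a sanity check and more conceptual alternative, the identity can also be seen as saying that in the joint eigenspace of $\vec{g}$ with outcome $x$, the operator $(L\vec{g})_i = \prod_j g_j^{L_{ij}}$ acts as $(-1)^{(Lx)_i}$; hence the $x$-eigenspace of $\vec{g}$ sits inside the $Lx$-eigenspace of $L\vec{g}$, and summing over $x$ with $Lx=y$ recovers the whole $y$-eigenspace of $L\vec{g}$. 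I would likely mention this picture but carry out the character-sum proof because it makes the equality of operators (not just subspaces) transparent.
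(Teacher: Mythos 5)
Your proof is correct, but it takes a genuinely different route from the paper's. The paper argues via eigenvalues and rank counting: for any $x$ with $Lx=y$ and any $i$, the operator $(L\vec{g})_i=\prod_j g_j^{L_{ij}}$ acts on the range of $P(\vec{g},x)$ as the scalar $(-1)^{(Lx)_i}=(-1)^{y_i}$, hence $P(L\vec{g},y)P(\vec{g},x)=P(\vec{g},x)$; it then observes that $\{P(\vec{g},x):Lx=y\}$ is a collection of $2^{m-k}$ mutually orthogonal projections of rank $2^{n-m}$ each, while $P(L\vec{g},y)$ has rank $2^{n-k}$, so the containment of ranges forces equality of the projections. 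This is precisely the ``conceptual picture'' you sketch at the end, completed by the rank count (which is where independence of the tuple $\vec{g}$ enters, via the earlier observation that all $P(\vec{g},x)$ have the same rank). Your character-sum expansion instead establishes the operator identity directly, and the computation is sound: the inner sum over the affine set $x_0+\ker L$ vanishes unless $a\in(\ker L)^\perp=\mathrm{row}(L)$, where it contributes $2^{m-k}(-1)^{y\cdot b}$ at $a=L^Tb$, and the regrouping $\prod_j g_j^{(L^Tb)_j}=\prod_i (L\vec{g})_i^{b_i}$ is legitimate because the $g_j$ commute and square to $I$ (self-adjoint unitaries). What your route buys is self-containedness --- it never invokes the rank facts, nor even the independence hypothesis --- at the cost of a somewhat longer calculation; what the paper's route buys is brevity, since the rank facts are already in place from the surrounding discussion.
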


\begin{proof}
	Take any $i \in \{1,\dots k\}$, any $x \in\F_2^m$ such that $Lx=y$. Then,
	\begin{multline}
	\left(\prod_{j=1}^mg_j^{L_{ij}}\right) P(\vec{g},x) = \left(\prod_{j=1}^m g_j^{L_{ij}}\right)
	 \left(2^{-m}\prod_{j=1}^m (I+(-1)^{x_j}g_j)\right) \\
	 = (-1)^{\sum_{j=1}^m L_{ij}x_j} P(\vec{g},x) = (-1)^{y_i} P(\vec{g},x)
	\end{multline} 
	Then, for any $x\in\F_2^m$ such that $Lx=y$, $P(L\vec{g},y) P(\vec{g},x) = P(\vec{g},x)$ holds. Since $\{P(\vec{g},x):Lx=y\}$ is a collection of $2^{m-k}$ orthogonal projections of rank $2^{n-m}$ and since $P(L\vec{g},y)$ has rank $2^{n-k}$, the lemma follows. 
\end{proof}

The maximally entangled state in $\C^{2^n} \otimes \C^{2^n}$ is \begin{equation} \ket{\psi} = 2^{-n/2} \sum_{z \in \F_2^n} \ket{zz} \end{equation}
The collection \begin{equation} \ket{\psi_{\alpha\beta}} =  I \otimes \X^{\alpha^T} \Z^{\beta^T} \ket{\psi} , \;\; \alpha,\beta \in \F_2^{n}  \end{equation}
is the Bell basis of $\C^{2^n} \otimes \C^{2^n}$. 

First, the maximally entangled state has the properties: 

\begin{lemma}\label{lemma:PropertiesOfMaximallyEntangledState}
	For all matrices $M \in \C^{2^n \times 2^n}$, $M \otimes I \ket{\psi} = I \otimes M^T \ket{\psi}$ and $\bra{\psi} I \otimes M \ket{\psi} = 2^{-n} Tr (M)$. 
\end{lemma}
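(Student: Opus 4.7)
Both parts of this lemma are standard identities for the maximally entangled state (the so-called ``ricochet'' property for part 1, and the reduction of the diagonal to a trace for part 2), and both follow from direct computation in the computational basis. I do not anticipate any real obstacle; the only bookkeeping to be careful with is which tensor factor each operator acts on.

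For the first identity, the plan is to expand both sides in the product basis $\{\ket{y}\otimes\ket{z}\}_{y,z\in\F_2^n}$ and match coefficients. The left-hand side becomes
$M\otimes I\ket{\psi} = 2^{-n/2}\sum_{z}(M\ket{z})\otimes\ket{z} = 2^{-n/2}\sum_{y,z}M_{yz}\,\ket{y}\otimes\ket{z}$,
while the right-hand side becomes
$I\otimes M^T\ket{\psi} = 2^{-n/2}\sum_{z}\ket{z}\otimes(M^T\ket{z}) = 2^{-n/2}\sum_{y,z}(M^T)_{yz}\,\ket{z}\otimes\ket{y} = 2^{-n/2}\sum_{y,z}M_{zy}\,\ket{z}\otimes\ket{y}$.
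Swapping the dummy summation indices $y\leftrightarrow z$ on the right-hand side makes the two expressions coincide term by term.

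For the second identity, the most direct route is to expand $\ket{\psi}$ on both sides of $I\otimes M$:
$\bra{\psi}(I\otimes M)\ket{\psi} = 2^{-n}\sum_{z,z'}\braket{z}{z'}\,\bra{z}M\ket{z'}$,
and the orthonormality $\braket{z}{z'}=\mathbf{1}(z=z')$ collapses the sum to the diagonal, giving $2^{-n}\sum_{z}\bra{z}M\ket{z} = 2^{-n}\,\mathrm{Tr}(M)$. If desired, one could instead derive part 2 from part 1 by using $I\otimes M\ket{\psi} = M^T\otimes I\ket{\psi}$ and then performing the same diagonal collapse on $\bra{\psi}(M^T\otimes I)\ket{\psi}$, but this offers no genuine simplification.
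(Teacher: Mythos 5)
Your proposal is correct and follows exactly the paper's approach: the paper's proof is the one-line remark that the lemma ``follows by expanding $M$ in the computational basis,'' and your computation is precisely that expansion carried out in full. No issues.
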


\begin{proof}
	Follows by expanding $M$ in the computational basis. 
\end{proof}

Pauli group measurements acting on Bell basis states satisfy the following:

\begin{lemma}\label{lemma:ActionOfPauliMeasurementOnBellBasis}
	For all tuples $\vec{g}$ of independent self-adjoint commuting elements of $G_n$ such that the associated projections $P(\vec{g},x)$ have only real entries when expressed as matrices in the computational basis, for all $\alpha,\beta \in \F_2^{n}$, for all $x,y \in \F_2^m$, 
	\begin{equation} (P(\vec{g},x) \otimes P(\vec{g},y))\ket{\psi_{\alpha\beta}} = \ind{x= y+\mathcal{F}(\vec{g}) \mathcal{S}\colvec{\alpha}{\beta}} P(\vec{g},x) \otimes I \ket{\psi_{\alpha\beta}} \end{equation} where for an expression that takes the values true or false, $\I(expression)$ takes the corresponding values 1 or 0. 
\end{lemma}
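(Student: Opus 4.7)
My plan is to reduce the two-sided measurement on $\ket{\psi_{\alpha\beta}}$ to a one-sided measurement, by transferring the right-hand projection $P(\vec{g},y)$ onto the left tensor factor, and then to invoke orthogonality of the projections in a single Pauli measurement's spectral decomposition.

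First I would expand $\ket{\psi_{\alpha\beta}} = I \otimes \X^{\alpha^T}\Z^{\beta^T} \ket{\psi}$ and act by $I \otimes P(\vec{g},y)$. Noting that $\mathcal{F}(\X^{\alpha^T}\Z^{\beta^T}) = \rowvec{\alpha^T}{\beta^T}$, so that $\mathcal{F}(\X^{\alpha^T}\Z^{\beta^T})^T = \colvec{\alpha}{\beta}$, Lemma \ref{lemma:ActionOfPauliGroupOnMeasurementProjections} allows me to commute the projection past the Pauli element:
\begin{equation}
I \otimes P(\vec{g},y) \ket{\psi_{\alpha\beta}} = I \otimes \X^{\alpha^T}\Z^{\beta^T} P(\vec{g}, y') \ket{\psi},
\end{equation}
where $y' = y + \mathcal{F}(\vec{g})\mathcal{S}\colvec{\alpha}{\beta}$. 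Then Lemma \ref{lemma:PropertiesOfMaximallyEntangledState} transfers $P(\vec{g},y')$ from the right factor to the left, yielding $P(\vec{g},y')^T$ on the left. Because $P(\vec{g},y')$ is self-adjoint and, by the hypothesis of the lemma, has real entries in the computational basis, its transpose equals itself, and I obtain
\begin{equation}
I \otimes P(\vec{g},y) \ket{\psi_{\alpha\beta}} = P(\vec{g},y') \otimes I \ket{\psi_{\alpha\beta}}.
\end{equation}

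Finally I would apply $P(\vec{g},x) \otimes I$ to both sides. Since $\{P(\vec{g},z) : z \in \F_2^m\}$ is a complete set of orthogonal projections, the product $P(\vec{g},x)P(\vec{g},y')$ equals $P(\vec{g},x)$ if $x = y'$ and vanishes otherwise, which is exactly the claimed indicator-function formula. The step I expect to require the most care is the transfer across the maximally entangled state: the reality assumption on $P(\vec{g},\cdot)$ is what ensures that the transpose introduced by Lemma \ref{lemma:PropertiesOfMaximallyEntangledState} does not change the projection, and the bookkeeping of $\mathcal{F}(\X^{\alpha^T}\Z^{\beta^T})^T$ must match the symplectic shift $\mathcal{F}(\vec{g})\mathcal{S}\colvec{\alpha}{\beta}$ appearing in the statement.
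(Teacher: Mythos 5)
Your proposal is correct and follows essentially the same route as the paper's (very terse) proof: commute $P(\vec{g},y)$ past $\X^{\alpha^T}\Z^{\beta^T}$ via Lemma \ref{lemma:ActionOfPauliGroupOnMeasurementProjections}, transfer it across $\ket{\psi}$ using $M\otimes I\ket{\psi}=I\otimes M^T\ket{\psi}$ together with the reality and self-adjointness hypotheses, and finish with orthogonality of the $P(\vec{g},\cdot)$. The bookkeeping $y'=y+\mathcal{F}(\vec{g})\mathcal{S}\colvec{\alpha}{\beta}$ matches the indicator in the statement, so nothing is missing.
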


\begin{proof}
	Follows from Lemma \ref{lemma:ActionOfPauliGroupOnMeasurementProjections} and the relation $M \otimes I \ket{\psi} = I \otimes M^T \ket{\psi}$
\end{proof}

The QKD security proof also uses the following lemma. It gives two equivalent expressions for the projection on the subspace of $\C^{2^n} \otimes \C^{2^n}$ that corresponds to a specific pattern of bit flip errors or a specific pattern of phase flip errors. 

\begin{lemma}\label{lemma:SumOfBellStateProjections}
	For all $n$, for all $\alpha, \beta \in \F_2^{n}$,
	\begin{align}
	\sum_{\beta' \in \F_2^{n}} \ketbra{\psi_{\alpha\beta'}}{\psi_{\alpha\beta'}} &= \sum_{z_A \in \F_2^n} \ketbra{z_A, z_A + \alpha}{z_A, z_A + \alpha} \\
	\sum_{\alpha' \in \F_2^{n}} \ketbra{\psi_{\alpha'\beta}}{\psi_{\alpha'\beta}} &= \sum_{x_A \in \F_2^n} H^{\otimes 2n}\ketbra{x_A, x_A + \beta}{x_A, x_A + \beta}H^{\otimes 2n}
	\end{align}
\end{lemma}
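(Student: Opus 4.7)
The plan is to prove the first equality by direct computation in the computational basis, then to deduce the second from the first by conjugation with $H^{\otimes 2n}$.

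For the first equality, I would expand using $\ket{\psi_{\alpha\beta'}} = (I \otimes X^{\alpha^T}Z^{\beta'^T})\ket{\psi} = 2^{-n/2}\sum_z (-1)^{\beta' \cdot z}\ket{z, z+\alpha}$, since $Z^{\beta'^T}\ket{z} = (-1)^{\beta' \cdot z}\ket{z}$ and $X^{\alpha^T}\ket{z} = \ket{z+\alpha}$. Then
\begin{equation}
\sum_{\beta'}\ketbra{\psi_{\alpha\beta'}}{\psi_{\alpha\beta'}} = 2^{-n}\sum_{z,z'}\left(\sum_{\beta'}(-1)^{\beta'\cdot(z+z')}\right)\ketbra{z,z+\alpha}{z',z'+\alpha},
\end{equation}
and the inner sum evaluates to $2^n \I(z=z')$ by the standard character orthogonality over $\F_2^n$, collapsing the double sum to $\sum_z \ketbra{z,z+\alpha}{z,z+\alpha}$.

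For the second equality, I would show $H^{\otimes 2n}\ket{\psi_{\alpha'\beta}}\bra{\psi_{\alpha'\beta}}H^{\otimes 2n} = \ket{\psi_{\beta\alpha'}}\bra{\psi_{\beta\alpha'}}$, and then apply the first equality with the roles of the indices swapped. To obtain this identity, I first note that $H^{\otimes 2n}\ket{\psi} = \ket{\psi}$ (by a short direct calculation using $H\ket{z} = 2^{-1/2}\sum_y(-1)^{yz}\ket{y}$ and character orthogonality, exactly as above). Then I conjugate:
\begin{equation}
H^{\otimes 2n}(I \otimes X^{\alpha'^T}Z^{\beta^T})H^{\otimes 2n} = I \otimes (HXH)^{\alpha'^T}(HZH)^{\beta^T} = I \otimes Z^{\alpha'^T}X^{\beta^T},
\end{equation}
using $HXH=\Z$, $HZH=\X$. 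Applying this to $\ket{\psi}$ gives $H^{\otimes 2n}\ket{\psi_{\alpha'\beta}} = (I \otimes Z^{\alpha'^T}X^{\beta^T})\ket{\psi} = (-1)^{\alpha'\cdot\beta}\ket{\psi_{\beta\alpha'}}$, where the sign comes from the single-qubit relation $\Z\X = -\X\Z$ applied coordinatewise. This global sign cancels in the outer product, which is the desired conjugation identity.

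No step here looks to be a real obstacle; the only thing to watch is the order of indices and the sign bookkeeping from $Z^aX^b = (-1)^{ab}X^bZ^a$, which is why doing the derivation via $H^{\otimes 2n}$-conjugation rather than a direct expansion of $\sum_{\alpha'}\ketbra{\psi_{\alpha'\beta}}{\psi_{\alpha'\beta}}$ seems cleaner: a direct expansion leads to a shift of the $X$-index whose reindexing is less transparent than the phase orthogonality used in the first part.
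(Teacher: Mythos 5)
Your proof is correct, but it takes a different route from the paper. You compute everything directly: the first identity by expanding $\ket{\psi_{\alpha\beta'}}=2^{-n/2}\sum_z(-1)^{\beta'\cdot z}\ket{z,z+\alpha}$ and using character orthogonality over $\F_2^n$, and the second by the conjugation identity $H^{\otimes 2n}\ketbra{\psi_{\alpha'\beta}}{\psi_{\alpha'\beta}}H^{\otimes 2n}=\ketbra{\psi_{\beta\alpha'}}{\psi_{\beta\alpha'}}$ (with the global sign $(-1)^{\alpha'\cdot\beta}$ correctly cancelling in the outer product). The paper instead runs both identities through its stabilizer machinery: it writes $\ketbra{z_Az_B}{z_Az_B}$ and $\ketbra{\psi_{\alpha\beta}}{\psi_{\alpha\beta}}$ as joint eigenprojections $P(\vec{g},x)$ of the tuples $(\vec{\sigma}_3^A,\vec{\sigma}_3^B)$ and $(\vec{\sigma}_3^{AB},\vec{\sigma}_1^{AB})$ respectively, observes that applying $\rowvec{I}{I}$ to the first tuple and $\rowvec{I}{0}$ to the second yields the same coarse-grained tuple $\vec{\sigma}_3^{AB}$, and invokes Lemma \ref{lemma:RelationOfMeasurementProjectionsForTwoLinearlyRelatedTuples} to equate the two fiber sums. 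Your version is more elementary and self-contained, at the cost of some index and sign bookkeeping; the paper's version is essentially a two-line corollary of a lemma it has already established and reuses heavily in the security proof (e.g.\ in Propositions \ref{prop:CommutingIsometries} and \ref{prop:OutputOfIdealTransformation}), so it fits the paper's framework better but is no more rigorous. Either proof is acceptable.
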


\begin{proof}
	Let $e_1, \dots e_n$ denote the standard basis of $\F_2^{1 \times n}$. For $i \in \{1,3\}$ and $R \in \{A,B\}$, let $\vec{\sigma}^R_i$ denote the tuple $\sigma_i^{e_1}, \dots \sigma_i^{e_n}$ acting on register $R$, and let $\vec{\sigma}_i^{AB}$ denote the tuple $\sigma_i^{e_1}\otimes \sigma_i^{e_1}, \dots, \sigma_i^{e_n} \otimes \sigma_i^{e_n}$. Note that for all $\alpha, \beta$, 
	\begin{equation}
	\ketbra{\alpha\beta}{\alpha\beta}_{AB} = P\left(\colvec{\vec{\sigma}_3^A}{\vec{\sigma}_3^B}, \colvec{\alpha}{\beta}\right) \; ; \; \ketbra{\psi_{\alpha\beta}}{\psi_{\alpha\beta}} = P\left(\colvec{\vec{\sigma}_3^{AB}}{\vec{\sigma}_1^{AB}}, \colvec{\alpha}{\beta}\right)
	\end{equation}
	The first relation of Lemma \ref{lemma:SumOfBellStateProjections} now follows from
	\begin{equation}
	\rowvec{I}{I}\colvec{\vec{\sigma}_3^A}{\vec{\sigma}_3^B} = \rowvec{I}{0} \colvec{\vec{\sigma}_3^{AB}}{\vec{\sigma}_1^{AB}}
	\end{equation} 
	and Lemma \ref{lemma:RelationOfMeasurementProjectionsForTwoLinearlyRelatedTuples}. The second relation follows similarly. 
\end{proof}

\subsection{Approximately computing certain functions from only a two-universal hash of the input}\label{sec:ComputingCertainFunctions}

Take any subset $S \subset \F_2^n$. Consider the function $ f_S : \F_2^n  \rightarrow S \cup \{\bot\}$ given by  
\begin{equation}
f_S(\alpha) = \begin{cases}
\alpha & \text{if } \alpha \in S \\
\bot & \text{otherwise}
\end{cases}
\end{equation}
If $\alpha$ specifies errors, then $f_S$ computes whether $\alpha$ belongs to a set $S$ of acceptable errors, if so computes the entire string $\alpha$, and otherwise outputs an error message. It is very convenient to have functions of this form when constructing QKD protocols and security proofs.

It turns out that it is possible to approximately compute $f_S(\alpha)$ given only a two universal hash of the input. Recall \cite{carter1979universal,wegman1981new}:

\begin{definition}\label{def:TwoUniversalHash}
	A family of functions $\mathbf{H}$ from finite set $\mathbf{X}$ to finite set $\mathbf{Y}$ is two-universal with collision probability at most $\epsilon$ if for all $x \neq x' \in \mathbf{X}$, \begin{equation} \Pr_{h \leftarrow \mathbf{H}}(h(x)=h(x')) \leq \epsilon \end{equation} where the probability is taken over $h$ chosen uniformly from $\mathbf{H}$. If no explicit value is specified for the collision probability bound, then the default value $\epsilon = 1/|\mathbf{Y}|$ is taken. 
\end{definition}

Now, let $\mathbf{H}$ be a two-universal family from $\F_2^n$ to some finite set $\mathbf{Y}$ with collision probability bound $\epsilon$. Let $S = \{s_1, \dots, s_m\} $. Consider the function $g_S: \mathbf{H} \times \mathbf{Y} \rightarrow S \cup \{\bot\}$ given by the deterministic algorithm: 
\begin{enumerate}
	\item On input $h,y$, 
	\item For $i=1, \dots, m$, if $h(s_i) = y$, output $s_i$ and stop. 
	\item Output $\bot$. 
\end{enumerate}
Then:

\begin{theorem}\label{thm:ApproximateMembershipWithGeneralTwoUniversalHash}
	For all $n \in \mathbb{N}$, for all $\epsilon$, for all two-universal families $\mathbf{H}: \F_2^n \rightarrow \mathbf{Y}$ with collision probability bound $\epsilon$, for all subsets $S \subset \F_2^n$, for all $\alpha \in \F_2^n$,
	\begin{equation}
	\Pr_{h \leftarrow \mathbf{H}} (f_S(\alpha) \neq g_S(h,h(\alpha))) \leq \epsilon |S|
	\end{equation}
\end{theorem}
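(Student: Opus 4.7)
The plan is to split into cases based on whether $\alpha \in S$, and in each case bound the failure event by a union bound over two-universal collisions.

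First I would set up notation: write $S = \{s_1,\dots,s_m\}$ with $m = |S|$, and observe that by construction of $g_S$, if $g_S(h, h(\alpha)) = s_i$ then $h(s_i) = h(\alpha)$ and $h(s_j) \neq h(\alpha)$ for all $j < i$; conversely $g_S(h, h(\alpha)) = \bot$ exactly when $h(s_i) \neq h(\alpha)$ for all $i$.

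Next I would handle Case 1: $\alpha \notin S$, so $f_S(\alpha) = \bot$. Then $g_S(h, h(\alpha)) \neq \bot$ occurs iff there is some $i$ with $h(s_i) = h(\alpha)$. Since $\alpha \neq s_i$ for every $i$, the two-universal property gives $\Pr_h(h(s_i) = h(\alpha)) \leq \epsilon$ for each $i$, and a union bound over $i \in \{1,\dots,m\}$ yields probability at most $m\epsilon = |S|\epsilon$.

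Then Case 2: $\alpha \in S$, say $\alpha = s_j$, so $f_S(\alpha) = s_j$. Since $h(s_j) = h(\alpha)$ trivially, the algorithm never outputs $\bot$, so failure means $g_S$ outputs some $s_i$ with $i < j$, which requires $h(s_i) = h(s_j)$. For each such $i$ the elements $s_i \neq s_j$ are distinct, so two-universality gives collision probability at most $\epsilon$; a union bound over $i \in \{1,\dots,j-1\}$ yields at most $(j-1)\epsilon \leq (|S|-1)\epsilon \leq |S|\epsilon$.

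There is no real obstacle here — the lemma is essentially an accounting of collisions — so both cases combine immediately to give the claimed bound. The only subtlety worth flagging is that in Case 1 the union bound runs over all of $S$, while in Case 2 it effectively runs over $S \setminus \{\alpha\}$; taking the weaker common bound $|S|\epsilon$ covers both uniformly.
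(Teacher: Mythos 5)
Your proof is correct and is essentially the same argument as the paper's: the paper observes directly that the failure event implies a collision $h(s)=h(\alpha)$ for some $s\in S\setminus\{\alpha\}$ and applies the union bound, while your case analysis on $\alpha\in S$ versus $\alpha\notin S$ just verifies that implication more explicitly. Both routes give the same bound $\epsilon|S|$.
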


\begin{proof}
	The event \begin{equation} f_S(\alpha) \neq g_S(h,h(\alpha)) \end{equation} implies the event \begin{equation} \exists s \in S \backslash \{\alpha\} : h(s) = h(\alpha) \end{equation} The union bound and Definition \ref{def:TwoUniversalHash} give
	\begin{equation}
	\Pr_{h \leftarrow \mathbf{H}} (f_S(\alpha) \neq g_S(h,h(\alpha))) \leq \epsilon |S|
	\end{equation}
\end{proof}

The remainder of this section specializes Theorem \ref{thm:ApproximateMembershipWithGeneralTwoUniversalHash} to the case that the family $\mathbf{H}$ is a family of matrices over $\F_2$, and the set $S$ is a Hamming Ball. 

First, consider the following useful lemmas about random matrices over the field with two elements. 

Recall a property of random linear functions:
\begin{lemma}\label{lemma:RandomLinearFunctionIsTwoUniversal}
	Let $L$ be uniformly random in $\F_2^{k \times n}$, and take any fixed $x \in \F_2^n - \{0\}$. Then, $\Pr_L(Lx=0) = 2^{-k}$. 
\end{lemma}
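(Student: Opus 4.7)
The plan is to reduce the $k$-row computation to an independent product of $k$ single-row events. Writing $L$ as a stack of rows $r_1, \dots, r_k$, where each $r_i \in \F_2^{1 \times n}$ is sampled independently and uniformly, the event $Lx = 0$ is exactly the conjunction of the $k$ independent events $r_i x = 0$. So by independence, $\Pr_L(Lx = 0) = \prod_{i=1}^k \Pr_{r_i}(r_i x = 0)$, and it suffices to show that each factor equals $1/2$.

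For the single-row claim, I would fix a coordinate $j$ such that $x_j = 1$, which exists because $x \neq 0$. Then $r_i x = \sum_{\ell} (r_i)_\ell x_\ell = (r_i)_j + \sum_{\ell \neq j} (r_i)_\ell x_\ell$ over $\F_2$. Conditioning on the values of $(r_i)_\ell$ for $\ell \neq j$, the sum $\sum_{\ell \neq j}(r_i)_\ell x_\ell$ is some fixed bit $b$, and then $r_i x = 0$ iff $(r_i)_j = b$. Since $(r_i)_j$ is independent of the other coordinates and uniform on $\{0,1\}$, this conditional probability is exactly $1/2$, hence the unconditional probability is $1/2$.

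Multiplying the $k$ independent factors yields $\Pr_L(Lx = 0) = 2^{-k}$, as claimed. There is no real obstacle here; the only thing to be careful about is making the independence structure explicit, namely that the uniform distribution on $\F_2^{k \times n}$ factorizes as the product of uniform distributions on each row, which in turn factorizes into uniform distributions on each entry. Once this is noted, the computation is a one-line argument using a nonzero coordinate of $x$.
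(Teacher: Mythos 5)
Your proof is correct and is essentially the same argument as the paper's: both isolate a coordinate $j$ with $x_j=1$ and use the fact that the corresponding entries of $L$ are uniform and independent of the rest. The only cosmetic difference is that you factor row by row and treat each $r_i x$ separately, whereas the paper writes $Lx = L_j + L_{-j}x_{-j}$ with $L_j$ the $j$-th column and concludes in one step that $Lx$ is uniform on $\F_2^k$.
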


\begin{proof}
	Take $i$ such that $x_i = 1$. Then, $Lx = L_i + L_{-i} x_{-i}$, where $L_i$ is the $i$-th column of $L$ and where $L_{-i},x_{-i}$ are formed from $L,x$ by omitting the $i$-th column and $i$-th entry respectively. Now, $L_i$ is uniform over $\F_2^k$ and independent from $L_{-i}$, so $Lx$ is also uniform over $\F_2^k$. 
\end{proof}

Thus, for all $y \neq z \in \F_2^n$, $\Pr_L(Ly=Lz) = 2^{-k}$, so random linear functions are two-universal. 

Later on, it will be more convenient to select matrices not from all of $\F_2^{k \times n}$, but from the subset consisting of those matrices of rank $k$. This subset also satisfies the two-universal condition, as the following two lemmas show. 

\begin{lemma}\label{lemma:NumberOfFullRankMatrices}
	For all integers $n \geq k \geq 1$, the number of rank $k$ matrices in $\F_2^{k \times n}$ is $ \prod_{i=1}^k (2^n - 2^{i-1}) $
\end{lemma}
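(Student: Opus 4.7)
The plan is to count $k \times n$ matrices over $\F_2$ of rank exactly $k$ by building them up one row at a time. A $k \times n$ matrix has rank $k$ (necessarily equal to its row-rank, since $k \le n$) if and only if its $k$ rows, viewed as vectors in $\F_2^{1 \times n}$, are linearly independent. So it suffices to count ordered $k$-tuples of linearly independent vectors in $\F_2^{1 \times n}$.

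I would proceed inductively on the row index. For the first row, any nonzero vector in $\F_2^{1\times n}$ qualifies, giving $2^n - 1 = 2^n - 2^0$ choices. Suppose rows $1, \dots, i-1$ have been chosen to be linearly independent; they span an $(i-1)$-dimensional subspace of $\F_2^{1\times n}$, which contains exactly $2^{i-1}$ vectors. The $i$-th row extends the tuple to a linearly independent family if and only if it lies outside this subspace, giving $2^n - 2^{i-1}$ valid choices, independent of which particular linearly independent predecessors were selected. Multiplying over $i = 1, \dots, k$ yields $\prod_{i=1}^k (2^n - 2^{i-1})$.

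There is essentially no obstacle here: the argument is a direct application of the standard fact that the number of vectors outside a $d$-dimensional subspace of $\F_2^{1\times n}$ is $2^n - 2^d$, together with the observation that this count depends only on the dimension of the span of the previous rows, not on their identity, so the multiplication principle applies cleanly. The only thing to be careful about is the implicit assumption $k \leq n$, which is given in the hypothesis.
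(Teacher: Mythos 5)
Your proof is correct and is exactly the paper's argument---choose the rows one at a time, with the $i$-th row required to lie outside the $2^{i-1}$-element span of the previous rows---just spelled out in more detail (the paper's proof is a single sentence). No further comment is needed.
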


\begin{proof}
	Given $i-1$ linearly independent rows, there are $2^{n} - 2^{i-1}$ ways to choose the $i$-th row outside their span. 
\end{proof}

\begin{lemma}\label{lemma:RandomFullRankMatrixIsTwoUniversal}
	Take $k \leq n$, let $L$ be a uniformly random rank $k$ matrix in $\F_2^{k \times n}$ and take any $x \in \F_2^n - \{0\}$. Then $\Pr_L(Lx=0) = \frac{2^{n-k}-1}{2^n-1} < 2^{-k}$
\end{lemma}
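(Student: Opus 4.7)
The plan is to count full rank $k \times n$ matrices $L$ with $Lx = 0$ directly, then divide by the total count from Lemma \ref{lemma:NumberOfFullRankMatrices}.

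The first step is a symmetry reduction. Since $GL_n(\F_2)$ acts transitively on $\F_2^n \setminus \{0\}$, for any nonzero $x$ there is an invertible $M \in \F_2^{n \times n}$ with $Mx = e_1$, where $e_1$ is the first standard basis vector. The map $L \mapsto L M^{-1}$ is a bijection on the set of rank $k$ matrices in $\F_2^{k \times n}$, and it sends the event $\{Lx = 0\}$ to the event $\{(LM^{-1}) e_1 = 0\}$, i.e.\ that the first column of the new matrix vanishes. So I may assume $x = e_1$ and count rank $k$ matrices whose first column is zero.

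Such matrices are in bijection with rank $k$ matrices in $\F_2^{k \times (n-1)}$ (just delete the zero first column). By Lemma \ref{lemma:NumberOfFullRankMatrices}, their number is $\prod_{i=1}^k (2^{n-1} - 2^{i-1})$, while the total number of rank $k$ matrices in $\F_2^{k \times n}$ is $\prod_{i=1}^k (2^n - 2^{i-1})$. Therefore
\begin{equation}
\Pr_L(Lx=0) = \prod_{i=1}^k \frac{2^{n-1} - 2^{i-1}}{2^n - 2^{i-1}}.
\end{equation}

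The only non-routine step is recognizing that this product telescopes. Factoring $2^{i-1}$ from numerator and denominator gives $\prod_{i=1}^k \frac{2^{n-i}-1}{2^{n-i+1}-1}$, in which consecutive factors cancel, leaving $\frac{2^{n-k}-1}{2^n-1}$. The strict inequality $\frac{2^{n-k}-1}{2^n-1} < 2^{-k}$ follows from $\frac{2^{n-k}-1}{2^n-1} < \frac{2^{n-k}}{2^n}$, which in turn holds because $\frac{a-1}{b-1} < \frac{a}{b}$ whenever $1 < a < b$ (here $a = 2^{n-k}$, $b = 2^n$, and the case $k = n$ gives $0 < 2^{-k}$ trivially). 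This yields both conclusions of the lemma.
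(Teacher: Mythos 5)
Your proof is correct and follows essentially the same route as the paper's: reduce to $x = e_1$ via an invertible change of basis, count rank-$k$ matrices with vanishing first column using Lemma \ref{lemma:NumberOfFullRankMatrices}, and simplify the resulting ratio. You simply make the telescoping computation and the $k=n$ edge case more explicit than the paper does.
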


\begin{proof}
	Take invertible $M \in \F_2^{n \times n}$ such that $Mx = (1,0,\dots, 0)^T$. Then $\Pr(Lx=0)= \Pr(L M^{-1} Mx = 0)$. Now, find the probability that the first column of $LM^{-1}$ is zero. Note that $LM^{-1}$ is also uniformly distributed over the rank $k$ matrices in $\F_2^{k \times n}$, so the probability its first column is zero is the number of rank $k$ matrices in $\F_2^{k \times (n-1)}$ divided by the number of rank $k$ matrices in $\F_2^{k \times n}$. Lemma \ref{lemma:NumberOfFullRankMatrices} implies: \begin{equation} \Pr(L M^{-1} Mx = 0) = \frac{\prod_{i=1}^k (2^{n-1} - 2^{i-1})}{\prod_{i=1}^k (2^n - 2^{i-1})} = \frac{2^{n-k}-1}{2^n-1} < 2^{-k} \end{equation} completing the proof of Lemma \ref{lemma:RandomFullRankMatrixIsTwoUniversal}. 
\end{proof}

Interestingly, the collision probability bound $\epsilon = \frac{2^{n-k}-1}{2^n-1}$ achieved by the full rank matrices is the lowest possible for a two-universal family $\F_2^n \rightarrow \F_2^k$. This follows from a slight strengthening of \cite[Proposition 1]{carter1979universal}: 

\begin{lemma}\label{lemma:LowestParameterForTwoUniversalFamily}
	For every family $\mathbf{H}$ (not necessarily two-universal) of functions from finite set $\mathbf{X}$ to finite set $\mathbf{Y}$, there exist $x \neq x' \in \mathbf{X}$ such that \begin{equation} \Pr_{h \leftarrow \mathbf{H}}(h(x) = h(x')) \geq \frac{\frac{|\mathbf{X}|}{|\mathbf{Y}|} -1}{|\mathbf{X}|-1} \end{equation}
\end{lemma}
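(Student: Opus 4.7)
The plan is to use a double counting / averaging argument over colliding pairs. For each fixed $h \in \mathbf{H}$, write $c_h(y) = |h^{-1}(y)|$ for $y \in \mathbf{Y}$; the number of ordered collision pairs $(x,x')$ with $x \neq x'$ and $h(x) = h(x')$ equals $\sum_y c_h(y)(c_h(y)-1) = \sum_y c_h(y)^2 - |\mathbf{X}|$. Since $\sum_y c_h(y) = |\mathbf{X}|$ is fixed, convexity (or Cauchy--Schwarz) gives $\sum_y c_h(y)^2 \geq |\mathbf{X}|^2/|\mathbf{Y}|$, so every single $h$ produces at least $|\mathbf{X}|(|\mathbf{X}|/|\mathbf{Y}| - 1)$ ordered colliding pairs.

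Next, I would swap the order of summation. Averaging the above bound over a uniformly random $h \leftarrow \mathbf{H}$ yields
\begin{equation}
\sum_{x \neq x'} \Pr_{h \leftarrow \mathbf{H}}(h(x) = h(x')) = \mathbb{E}_h\bigl[\#\text{collision pairs}\bigr] \geq |\mathbf{X}|\left(\frac{|\mathbf{X}|}{|\mathbf{Y}|} - 1\right).
\end{equation}
The sum on the left has $|\mathbf{X}|(|\mathbf{X}|-1)$ terms, so by the pigeonhole principle at least one pair $x \neq x'$ must achieve a collision probability at least the average, namely
\begin{equation}
\frac{|\mathbf{X}|\bigl(|\mathbf{X}|/|\mathbf{Y}| - 1\bigr)}{|\mathbf{X}|(|\mathbf{X}|-1)} = \frac{|\mathbf{X}|/|\mathbf{Y}| - 1}{|\mathbf{X}| - 1},
\end{equation}
which is the desired bound.

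There is no real obstacle here; the only subtlety is remembering to count ordered rather than unordered pairs so that the normalisation matches $|\mathbf{X}|(|\mathbf{X}|-1)$, and to note that the lower bound on $\sum_y c_h(y)^2$ holds pointwise in $h$ rather than only in expectation, so no assumption of two-universality (or even of any structure on $\mathbf{H}$) is required. The slight strengthening over \cite[Proposition 1]{carter1979universal} comes precisely from this fact that the bound applies to arbitrary families $\mathbf{H}$.
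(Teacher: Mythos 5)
Your proposal is correct and follows essentially the same route as the paper's proof: a pointwise (in $h$) lower bound on the number of ordered colliding pairs via convexity of $t \mapsto t^2$, followed by summing over $h$ and applying the pigeonhole principle to the $|\mathbf{X}|(|\mathbf{X}|-1)$ ordered pairs with $x \neq x'$. The paper likewise identifies the replacement of $|\mathbf{X}|^2$ by $|\mathbf{X}|(|\mathbf{X}|-1)$ as the sole strengthening over \cite[Proposition 1]{carter1979universal}, so nothing is missing.
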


\begin{proof}
	Follow the same proof as \cite{carter1979universal} until the point they apply the pigeonhole principle. At that point, observe that the number of non-zero terms in the sum is not only less than $|\mathbf{X}|^2$, as they say there, but is in fact at most $|\mathbf{X}|(|\mathbf{X}|-1)$. 
	
	In more detail, for $h \in \mathbf{H}, x,x'\in \mathbf{X}$, define \begin{equation} \delta_h(x,x') = \begin{cases}
	1 & \text{if } x\neq x' \wedge h(x)=h(x') \\
	0 & \text{otherwise}
	\end{cases} \end{equation}
	For every $h \in \mathbf{H}$ partition $\mathbf{X} = \cup_{y \in \mathbf{Y}} h^{-1}(y)$ then observe that
	\begin{equation} \sum_{x,x'\in\mathbf{X}} \delta_h(x,x') = \sum_{y \in \mathbf{Y}} |h^{-1}(y)|(|h^{-1}(y)| - 1) \geq \frac{|\mathbf{X}|^2}{|\mathbf{Y}|} - |\mathbf{X}| \end{equation} by the quadratic mean-arithmetic mean inequality. Now, sum over $h \in \mathbf{H}$:
	\begin{equation} \sum_{h \in \mathbf{H}}\sum_{x,x'\in\mathbf{X}} \delta_h(x,x') = \sum_{x,x'\in\mathbf{X}} \sum_{h \in \mathbf{H}} \delta_h(x,x') \geq |\mathbf{H}| (\frac{|\mathbf{X}|^2}{|\mathbf{Y}|} - |\mathbf{X}|) \end{equation}
	Now, $ \sum_{h \in \mathbf{H}} \delta_h(x,x') $ is non-zero only when $x \neq x'$. Then, there exist $x \neq x'$ such that \begin{equation} \sum_{h \in \mathbf{H}} \delta_h(x,x') \geq |\mathbf{H}|\frac{\frac{|\mathbf{X}|}{|\mathbf{Y}|} -1}{|\mathbf{X}|-1} \end{equation}
\end{proof}

Later results will also use the fact that a row submatrix of a random invertible matrix has the uniform distribution over full rank matrices:

\begin{lemma}\label{lemma:RowSubmatrixOfRandomInvertibleMatrixIsRandomFullRankMatrix}
	Take any integers $n\geq k \geq 1$, and any $S \subset \{1,\dots,n\}$ of size $k$. Let $L$ be uniformly distributed over invertible matrices in $\F_2^{n \times n}$. Let $L_S$ denote the matrix formed by rows of $L$ with indices in $S$. Then, $L_S$ is uniformly distributed over full rank matrices in $\F_2^{k \times n}$. 
\end{lemma}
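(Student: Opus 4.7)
The plan is to reduce to the case $S = \{1, \dots, k\}$ and then show by a direct counting argument that every full rank matrix $M \in \F_2^{k \times n}$ is the top $k$ rows of exactly the same number of invertible $n \times n$ matrices.

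For the reduction, let $P \in \F_2^{n \times n}$ be the permutation matrix that sends the rows indexed by $S$ to the first $k$ positions. Left multiplication by the invertible matrix $P$ is a bijection on invertible matrices in $\F_2^{n \times n}$, so $PL$ is again uniformly distributed over invertible matrices. The first $k$ rows of $PL$ are precisely $L_S$, so it suffices to handle $S = \{1, \dots, k\}$.

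For the counting, fix an arbitrary full rank $M \in \F_2^{k \times n}$, and count the invertible $L$ whose top $k$ rows equal $M$. Build $L$ row by row: the first $k$ rows are determined, and by induction the first $k + j - 1$ rows are linearly independent, spanning a subspace of size $2^{k+j-1}$. The $(k+j)$-th row must be chosen outside this span to preserve linear independence, giving $2^n - 2^{k+j-1}$ choices. The total number of extensions is therefore
\begin{equation}
\prod_{j=1}^{n-k}\bigl(2^n - 2^{k+j-1}\bigr),
\end{equation}
which is independent of $M$. Hence each full rank $M$ is the top-$k$-row block of the same number of invertible matrices $L$.

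Since $L$ is uniform over invertible matrices, the induced distribution of $L_S$ assigns equal mass to every full rank matrix in $\F_2^{k \times n}$ and zero mass to matrices of lower rank (any row submatrix of an invertible matrix has independent rows). This is precisely the uniform distribution on full rank matrices. There is no real obstacle here; the only point requiring care is the reduction step, to make sure permuting rows does not bias the distribution, which is immediate from the bijection argument above.
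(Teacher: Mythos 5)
Your proof is correct and is essentially the same counting argument as the paper's: fix a full rank $M$, count the $\prod_{i=1}^{n-k}(2^n-2^{k+i-1})$ ways to fill in the remaining rows, and note this count is independent of $M$. The only difference is your preliminary reduction to $S=\{1,\dots,k\}$ via a permutation matrix, which the paper skips because the extension count is the same regardless of which row positions are being filled in; this step is harmless but not needed.
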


\begin{proof}
	Pick any fixed full rank $\Lambda \in \F_2^{k \times n}$. Compute $\Pr(L_S=\Lambda)$ as the number of ways to choose the remaining rows of $L$, which is $\prod_{i=1}^{n-k} (2^n - 2^{k+i-1})$	divided by the number of invertible matrices in $\F_2^{n \times n}$, which is $\prod_{i=1}^n (2^n - 2^{i-1}) $. Thus, 
	\begin{equation}
	\Pr(L_S=\Lambda) = \frac{\prod_{i=1}^{n-k} (2^n - 2^{k+i-1})}{\prod_{i=1}^n (2^n - 2^{i-1})} = \frac{1}{\prod_{i=1}^k (2^n - 2^{i-1})}
	\end{equation}
	Thus, $L_S$ is uniform over the full rank matrices in $\F_2^{k \times n}$. 
\end{proof}

Applying Theorem \ref{thm:ApproximateMembershipWithGeneralTwoUniversalHash} when the set $S$ is a Hamming ball requires a bound on the size of Hamming balls. For $x,y \in \F_2^n$, let $d_H(x,y) = |\{i:x_i \neq y_i\}|$ denote the Hamming distance between them. Let $B_n(x,r)$ denote the Hamming ball of radius $r$ around $x$. Then:

\begin{lemma}\label{lemma:HammingBallSize}
	For all $n,r \in \mathbb{N}$ such that $2r \leq n $, for all $x \in \F_2^n$, $|B_n(x,r)| < 2^{nh(r/n)}$
\end{lemma}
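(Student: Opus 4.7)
The plan is to reduce to the standard entropy upper bound on a partial binomial sum. First, by translation invariance of the Hamming distance, $|B_n(x,r)| = |B_n(0,r)| = \sum_{i=0}^{r} \binom{n}{i}$, so the task is to bound this partial sum strictly by $2^{n h(r/n)}$. I will implicitly assume $r \geq 1$ (the statement becomes an equality when $r = 0$, so the paper's $\mathbb{N}$ convention must exclude zero).

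The main trick is to introduce $p = r/n$, which lies in $(0, 1/2]$ by the hypotheses $r \geq 1$ and $2r \leq n$, and to start from the identity $\sum_{i=0}^{n} \binom{n}{i} p^{i}(1-p)^{n-i} = 1$. For each $i \leq r$, the ratio $p^{i}(1-p)^{n-i} / [p^{r}(1-p)^{n-r}] = (p/(1-p))^{i-r}$ is at least $1$ because $p/(1-p) \leq 1$ and $i - r \leq 0$. Factoring out the minimum term $p^{r}(1-p)^{n-r}$ from the partial sum therefore gives
\begin{equation}
1 \;\geq\; \sum_{i=0}^{r} \binom{n}{i} p^{i}(1-p)^{n-i} \;\geq\; p^{r}(1-p)^{n-r} \sum_{i=0}^{r} \binom{n}{i},
\end{equation}
which rearranges to $\sum_{i=0}^{r} \binom{n}{i} \leq p^{-r}(1-p)^{-(n-r)}$. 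Taking $\log_2$ of the right-hand side recovers exactly $n h(r/n)$.

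To upgrade $\leq$ to the strict $<$ claimed in the statement, I would split into two regimes. When $2r < n$, one has $p < 1/2$, so $p/(1-p) < 1$ and already the $i = 0$ term of the partial sum contributes $(1-p)^{n}$, which is strictly larger than $p^{r}(1-p)^{n-r}$ (using $r \geq 1$); the second inequality above is then strict. When $2r = n$, every term $p^{i}(1-p)^{n-i}$ equals $2^{-n}$, but the partial sum covers only $i \leq n/2$, while the full binomial sum equals $1$; since $r = n/2 < n$, the first inequality is strict.

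There is no serious obstacle: the computation is routine once one multiplies and divides by $p^{r}(1-p)^{n-r}$. The only point requiring care is tracking where the inequality becomes strict in each of the $p < 1/2$ and $p = 1/2$ cases, since the non-strict version drops out immediately from the binomial identity.
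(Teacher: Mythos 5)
Your proof is correct and follows essentially the same route as the paper: both start from the binomial identity $\sum_{i=0}^{n}\binom{n}{i}p^{i}(1-p)^{n-i}=1$ with $p=r/n$ and compare each term of the partial sum to the factored-out quantity $p^{r}(1-p)^{n-r}=2^{-nh(r/n)}$. The only cosmetic difference is that the paper obtains strictness uniformly from the omitted positive terms $i>r$ (implicitly using $r\geq 1$, just as you do), whereas you split into the cases $2r<n$ and $2r=n$; both are fine.
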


\begin{proof}
	\begin{multline}
	|B_n(x,r)| 2^{-nh(r/n)} = \sum_{i=0}^r {n \choose i} \left(\frac{r}{n}\right)^r \left(\frac{n-r}{n}\right)^{n-r} \\
	\leq \sum_{i=0}^r {n \choose i} \left(\frac{r}{n}\right)^i \left(\frac{n-r}{n}\right)^{n-i} < \sum_{i=0}^n {n \choose i} \left(\frac{r}{n}\right)^i \left(\frac{n-r}{n}\right)^{n-i} = 1
	\end{multline}
\end{proof}

From Theorem \ref{thm:ApproximateMembershipWithGeneralTwoUniversalHash}, Lemma \ref{lemma:RandomFullRankMatrixIsTwoUniversal} and Lemma \ref{lemma:HammingBallSize} deduce:

\begin{corollary}\label{cor:ApproximateMembershipWithRandomFullRankMatrixAndHammingBall}
	For all $n,k,r \in \mathbb{N}$ with $2r \leq n$ and $k \leq n$, for all $\alpha \in \F_2^n$, \begin{equation} \Pr_L(\functionf(\alpha) \neq \functiong(L,L\alpha)) < 2^{-k + nh(r/n)} \end{equation} where $L$ is chosen uniformly from the full rank matrices in $\F_2^{k\times n}$. 
\end{corollary}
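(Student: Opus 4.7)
The plan is to obtain the corollary simply by composing the three results just established, applied to the specific family (uniform random full rank matrices) and the specific set (Hamming ball) named in the statement.

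First, I would identify the ingredients. Take $\mathbf{X} = \F_2^n$, $\mathbf{Y} = \F_2^k$, let $\mathbf{H}$ be the family of linear maps $x \mapsto Lx$ where $L$ ranges over the full rank matrices in $\F_2^{k \times n}$, and let $S = B_n(0,r)$. Then the functions $f_S$ and $g_S$ from subsection \ref{sec:ComputingCertainFunctions} specialize to $\functionf$ and $\functiong$, so the event in the corollary matches the event in Theorem \ref{thm:ApproximateMembershipWithGeneralTwoUniversalHash} verbatim.

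Next I would verify the two quantitative inputs needed to apply that theorem. For the collision probability, Lemma \ref{lemma:RandomFullRankMatrixIsTwoUniversal} shows that for any fixed nonzero $y - y' \in \F_2^n$, $\Pr_L(Ly = Ly') = \Pr_L(L(y-y')=0) < 2^{-k}$, so $\mathbf{H}$ is two-universal with collision probability bound $\epsilon = 2^{-k}$ (strictly less, but the bound suffices, and this is what gives the strict inequality in the conclusion). For the size of $S$, the hypothesis $2r \leq n$ is exactly the condition under which Lemma \ref{lemma:HammingBallSize} yields $|B_n(0,r)| < 2^{nh(r/n)}$.

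Finally, I would feed these two bounds into Theorem \ref{thm:ApproximateMembershipWithGeneralTwoUniversalHash}, whose conclusion becomes
\begin{equation}
\Pr_L(\functionf(\alpha) \neq \functiong(L,L\alpha)) \leq \epsilon |S| < 2^{-k} \cdot 2^{nh(r/n)} = 2^{-k+nh(r/n)},
\end{equation}
which is exactly the claim. There is no real obstacle here: the entire argument is a substitution, and the only minor point requiring care is preserving the strict inequality, which follows because at least one of the two bounds ($|S| < 2^{nh(r/n)}$) is strict.
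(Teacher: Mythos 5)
Your proposal is correct and is exactly the paper's argument: the paper derives the corollary by combining Theorem \ref{thm:ApproximateMembershipWithGeneralTwoUniversalHash} with the collision bound of Lemma \ref{lemma:RandomFullRankMatrixIsTwoUniversal} and the Hamming ball bound of Lemma \ref{lemma:HammingBallSize}, just as you do. Your extra remark on where the strict inequality comes from is a correct and welcome clarification.
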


\section{The two-universal hashing QKD protocol and its security}\label{sec:QKDProtocolAndItsSecurity}

Consider the following family $\pi(n,k,r)$ of entanglement-based QKD protocols, parameterized by $n,k,r \in \mathbb{N}$. The interpretation of the parameters is the following: $n$ is the number of qubits that each of Alice and Bob receive, $k$ is the size of each of their syndrome measurements and $n-2k$ is the size of their output secret key, and $r$ is the maximum number of bit flip or phase flip errors on which the protocol does not abort. The protocols output a secret key with security guarantees when $2n h(r/n) <2k < n$. 

It will be clear throughout that the size of the two syndrome measurements can vary independently, and so can the maximum number of tolerated bit flip and phase flip errors, but that would lead to overly complex notation, with five parameters $n,k,k',r,r'$, so it is not pursued explicitly below. 

\begin{enumerate}
	\item Alice and Bob each receive an $n$ qubit state from Eve, and they inform each other that the states have been received. 
	\item Alice and Bob publicly choose a random invertible $L \in \F_2^{n \times n}$. Let $L_1, L_2, L_3$ be the matrices formed by the first $k$ rows, the second $k$ rows, and the last $n-2k$ rows of $L$.  Let $M= (L^{-1})^T$, and let $M_1, M_2, M_3$ be the matrices formed by the first $k$, second $k$, and last $n-2k$ rows of $M$. $L_1$, $M_2$ are the parity check matrices of a CSS code. $L_3,M_3$ contain information about the logical $Z$ and $X$ operators on the codespace. 
	\item Alice applies the isometry $ \sum_{z} \ket{z,L_1 z}_{AU'_A}\bra{z}_A$ and Bob applies the isometry $\sum_{z} \ket{z, L_1z}_{BU'_B}\bra{z}_B$. This can be done by preparing $k$ ancilla qubits in state $0$ and applying a CNOT gate for each entry $L_1(i,j)$ that equals 1.  
	\item Alice and Bob measure all qubits in registers $A,B$ in the $\ket{+},\ket{-}$ basis, obtaining outcomes $x_A, x_B$. Alice and Bob measure all qubits in registers $U'_A,U'_B$ in the computational basis, obtaining outcomes $u_A,u_B$. 
	\item Alice and Bob compute $v_A = M_2 x_A$, $v_B=M_2 x_b$, $w_A=M_3 x_A$, $w_B=M_3 x_B$. 
	\item Alice and Bob discard registers $A,B,U'_A,U'_B$.
	\item Alice and Bob discard $x_A, x_B$, keeping only $v_A,v_B,w_A,w_B$. Thus, in effect, Alice and Bob erase $M_1 x_A, M_1x_B$. Note that the post measurement states in registers $A,B$, as well as $x_A,x_B$ have to be discarded in such a way that Eve cannot get them. 
	\item Alice and Bob announce $u_A, u_B, v_A, v_B$. Alice and Bob compute $s=g_{B_n(0,r)}(L_1, u_A+u_B)$ and $t=g_{B_n(0,r)}(M_2,v_A+v_B)$. 
	\item If both of $s,t$ are not $\bot$, then Alice takes $w_A$ to be the output secret key, and Bob takes $w_B + M_3 t$ to be the output secret key. 	
\end{enumerate}

As is usual in the literature on QKD, the protocol assumes that classical communication takes place over an authenticated channel. Unconditionally secure message authentication with composable security in the Abstract Cryptography framework can be obtained from a short secret key \cite{portmann2014key}, or using an advantage in channel noise \cite{ostrev2019composable}. 

If it is desired that the classical communication is minimized, then the following exchange of messages suffices: Bob confirms to Alice that he has received the qubits, Alice sends to Bob $L,u_A,v_A$, Bob informs Alice whether both of $s,t$ are not $\bot$. However, the initial formulation above better emphasizes the symmetry of the protocol, and makes clear that it is not important to keep the values $u_B,v_B,s,t$ secret. 

The following theorem establishes the security and robustness of the protocols $\pi(n,k,r)$. 

\begin{theorem}\label{thm:SecurityOfTheQKDProtocol}
	Take any $n,k,r \in\mathbf{N}$ such that $2n h(r/n) <2k < n$. Then, the protocol $\pi(n,k,r)$ is $2^{-k/2 + n h(r/n)/2 + 5/2 }$ secure. 

	Moreover, for any input state $\rho_{AB}$, the probability that $\pi(n,k,r)$ accepts on input $\rho_{AB}$ is $2^{-k/2 + n h(r/n)/2 + 3/2 }$ close to $Tr(\Pi_{n,r} \rho_{AB} \Pi_{n,r})$, where $\Pi_{n,r}$ is the projection on the subspace of systems $AB$ spanned by the Bell states with at most $r$ bit flip and at most $r$ phase flip errors.
\end{theorem}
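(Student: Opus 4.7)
The plan is to analyze $\pi(n,k,r)$ by expanding the input state in the Bell basis, identifying its measurements with CSS-style syndrome extractions, and bounding the deviation from the ideal using Corollary \ref{cor:ApproximateMembershipWithRandomFullRankMatrixAndHammingBall}. I would assume without loss of generality that Eve holds a purification $\ket{\rho}_{ABE}$, and expand in the Bell basis on $AB$ as $\ket{\rho} = \sum_{\alpha,\beta \in \F_2^n} \ket{\psi_{\alpha\beta}}_{AB} \ket{\phi_{\alpha\beta}}_E$, splitting this into a ``good'' part $\ket{\rho_g}$ supported on $(\alpha,\beta) \in B_n(0,r)^2$ plus an orthogonal complement. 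The target is to show that the protocol's output is close to that of an ideal process which accepts with probability $\|\ket{\rho_g}\|^2 = \operatorname{Tr}(\Pi_{n,r}\rho\Pi_{n,r})$ and, conditional on acceptance, outputs a uniformly random key that is uncorrelated from $CE$.

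For a Bell basis input $\ket{\psi_{\alpha\beta}}_{AB}$ I would trace through steps 3--7 using Lemmas \ref{lemma:ActionOfPauliGroupOnMeasurementProjections} and \ref{lemma:ActionOfPauliMeasurementOnBellBasis}, combined with the block identities $L_i M_j^T = \delta_{ij} I$ that follow from $M = (L^{-1})^T$. This should yield that the announcements satisfy $u_A + u_B = L_1 \alpha$ and $v_A + v_B = M_2 \beta$, that $w_A + w_B = M_3 \beta$, and that conditional on $(\alpha,\beta)$ the marginal triple $(u_A, v_A, w_A)$ is uniform over $\F_2^k \times \F_2^k \times \F_2^{n-2k}$. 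In particular $w_A$ appears as a uniform key whose joint distribution with the transcript $(L,u_A,u_B,v_A,v_B)$ and with Eve's register $\ket{\phi_{\alpha\beta}}_E$ is a product conditional on $(\alpha,\beta)$, so there is nothing to prove about secrecy on Bell-diagonal terms. Correctness on this branch reduces to checking $t = \beta$, which Corollary \ref{cor:ApproximateMembershipWithRandomFullRankMatrixAndHammingBall} handles.

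Next I would dispose of the cross terms $\ket{\psi_{\alpha\beta}}\bra{\psi_{\alpha'\beta'}}$ and of the bad subspace. The announced syndromes $L_1\alpha$ and $M_2\beta$ pin down $(\alpha,\beta)$ inside $B_n(0,r)^2$ precisely when the restrictions of $L_1$ and of $M_2$ to $B_n(0,r)$ are injective. By Lemma \ref{lemma:RowSubmatrixOfRandomInvertibleMatrixIsRandomFullRankMatrix}, when $L$ is uniform invertible both $L_1$ and $M_2$ are uniformly distributed over full-rank matrices, so Corollary \ref{cor:ApproximateMembershipWithRandomFullRankMatrixAndHammingBall} bounds the expected hashing failure by $2^{-k+nh(r/n)}$ per component. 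A Cauchy--Schwarz / Jensen step converting this classical expected failure into an amplitude-level bound should give trace distance at most $2^{-k/2 + nh(r/n)/2 + 3/2}$ between the real output state and a state of the form described in Lemma \ref{lemma:RelaxingImplicitAssumptionsInEpsilonSecurity}. Lemma \ref{lemma:RelaxingImplicitAssumptionsInEpsilonSecurity} then supplies the factor of two needed to reach the claimed $2^{-k/2 + nh(r/n)/2 + 5/2}$ security, and the robustness claim about $\Pr(\text{accept})$ and $\operatorname{Tr}(\Pi_{n,r}\rho\Pi_{n,r})$ follows from the same decomposition, since acceptance is near-certain on $\ket{\rho_g}$ and rare on its complement.

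The hard part will be the third step: the syndrome measurements only partially decohere Bell basis cross terms, so I must simultaneously show that the surviving off-diagonal contributions have small norm and that they carry no information about $w_A$ into $CE$. This is where averaging over the random invertible $L$ is essential, because it converts an adversarial choice of errors into the two-universal hashing guarantee of Corollary \ref{cor:ApproximateMembershipWithRandomFullRankMatrixAndHammingBall} and is also where the square-root relationship between the classical hash collision probability $2^{-k+nh(r/n)}$ and the quantum trace distance $2^{-k/2 + nh(r/n)/2 + 3/2}$ originates.
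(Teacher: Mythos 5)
Your overall architecture matches the paper's: Bell-basis expansion, identification of the measurements with CSS syndrome extraction via $L_iM_j^T=\delta_{ij}I$, Corollary \ref{cor:ApproximateMembershipWithRandomFullRankMatrixAndHammingBall} for the hashing failure, a square-root (fidelity-to-trace-distance) conversion giving $2^{-k/2+nh(r/n)/2+3/2}$, and Lemma \ref{lemma:RelaxingImplicitAssumptionsInEpsilonSecurity} for the final factor of $2$. The constants and the robustness claim are accounted for correctly. However, the step you yourself flag as ``the hard part'' --- controlling the Bell-basis cross terms that survive the syndrome measurements and showing they leak nothing about $w_A$ --- is exactly the step you have not supplied a mechanism for, and ``a Cauchy--Schwarz / Jensen step'' is not enough of a plan: the off-diagonal terms are operator-valued, entangled with Eve's register, and do not reduce to a classical expected failure probability in any obvious way.

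The paper's resolution is worth internalizing because it removes the cross-term problem entirely rather than bounding it. One writes the real protocol as isometries $\mathcal{U}_{real},\mathcal{V}_{real},\mathcal{W}$ followed by a partial trace (Proposition \ref{prop:TheStateAtTheEndOfTheProtocol}), and introduces \emph{ideal} isometries $\mathcal{U}_{ideal}=\sum_{\alpha,\beta}\ketbra{\psi_{\alpha\beta}}{\psi_{\alpha\beta}}\otimes\ket{f(\alpha),f(\alpha)}_{SS'}$ (and similarly $\mathcal{V}_{ideal}$ for $\beta$) together with simulator isometries that reproduce the public announcements. Propositions \ref{prop:RealAndIdealIsometriesForBitFlipErrors} and \ref{prop:RealAndIdealIsometriesForPhaseFlipErrors} prove the operator inequality $(\bra{\mathcal{L}}\mathcal{U}_{ideal}^\dagger\mathcal{U}_{simulator}^\dagger)(\mathcal{U}_{real}\ket{\mathcal{L}})\geq(1-2^{-k+nh(r/n)})I_{AB}$, which is where Lemma \ref{lemma:SumOfBellStateProjections} and the Corollary enter; fidelity then gives the square root. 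In the resulting ideal process the registers $S',T'$, which are traced out, contain the \emph{full} strings $\alpha,\beta$ whenever they lie in $B_n(0,r)$, so all cross terms between distinct $(\alpha,\beta)$ in the accept branch vanish exactly and the explicit output (Proposition \ref{prop:OutputOfIdealTransformation}) exhibits $w_A$ as uniform and in product with $E$. Two further points you would need to make explicit: (i) chaining the two approximations requires $\mathcal{U}_{simulator}\mathcal{V}_{real}=\mathcal{V}_{real}\mathcal{U}_{simulator}$ (Proposition \ref{prop:CommutingIsometries}), which follows from the commutation of $L_1\vec{\sigma}_3$ with $M_2\vec{\sigma}_1$, i.e.\ from $L_1M_2^T=0$ --- your block identities contain this fact but you do not use it for this purpose; (ii) the claim that $(u_A,v_A,w_A)$ is uniform conditional on $(\alpha,\beta)$ needs the computation $\bra{\psi_{\alpha\beta}}P(\cdot)_A\otimes I\ket{\psi_{\alpha\beta}}=2^{-n}$ from Lemma \ref{lemma:PropertiesOfMaximallyEntangledState}, not just a classical counting argument.
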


\subsection{Proof of Theorem \ref{thm:SecurityOfTheQKDProtocol}}\label{subsec:ProofOfTheMainTheoremOnQKDSecurity}

The main idea of the proof of Theorem \ref{thm:SecurityOfTheQKDProtocol} is that the real values $g_{B_n(0,r)}(L_1, u_A+u_B)$ and $g_{B_n(0,r)}(M_2,v_A+v_B)$ computed during the protocol can be replaced by the corresponding ideal values $\functionf(\alpha),\functionf(\beta)$. From now on, use shorthand notation and skip the subscript $B_n(0,r)$, thus writing $f$ for $\functionf$ and $g$ for $\functiong$. 

The steps of the proof of Theorem \ref{thm:SecurityOfTheQKDProtocol} are the propositions below. Start by writing the action of the protocol as an isometry followed by a partial trace. 

\begin{proposition}\label{prop:TheStateAtTheEndOfTheProtocol}
	Let $\mathcal{E}_{real}$ be the completely positive trace preserving transformation applied by the first eight steps of the protocol. Then, for all input states $\rho_{ABE}$ to the protocol, the output state $\mathcal{E}_{real}(\rho_{ABE})$ of the classical registers $\mathbf{L},U_A,U_B,V_A,V_B,W_A,W_B,S,T$ and the quantum register of Eve equals 
	\begin{equation}
Tr_{AB\mathbf{L'}S'T'U'_AU'_BV'_AV'_BW'_AW'_B}  \mathcal{W} \mathcal{V}_{real} \mathcal{U}_{real} \left(\rho \otimes \ketbra{\mathcal{L}}{\mathcal{L}}\right) \mathcal{U}_{real}^\dagger \mathcal{V}_{real}^\dagger \mathcal{W}^\dagger 
\end{equation}
where 
\begin{equation}
\ket{\mathcal{L}} = \sum_{L} \sqrt{p_L} \ket{LL}_{\mathbf{LL'}}
\end{equation}
is a purification of the choice of random matrix $L$, where 
\begin{multline}
\mathcal{U}_{Real} = \sum_{L,z_A,z_B} \ketbra{L}{L}_{\mathbf{L}} \otimes \ketbra{z_Az_B}{z_Az_B}_{AB} \\ \otimes \ket{L_1z_A,L_1z_A,L_1z_B,L_1z_B,g(L_1,L_1(z_A+z_B)),g(L_1,L_1(z_A+z_B))}_{U_AU'_AU_BU'_BSS'}
\end{multline} 
is an isometry that captures the measurement through which Alice and Bob obtain the values $u_A=L_1z_A$ and $u_B=L_1z_B$ as well as the subsequent computation of the value $s=g(L_1, L_1(z_A+z_B))$, where 
\begin{multline}
\mathcal{V}_{Real} = \sum_{L,x_A,x_B} \ketbra{L}{L}_{\mathbf{L}} \otimes \left( H^{\otimes 2n} \ketbra{x_Ax_B}{x_Ax_B} H^{\otimes 2n}\right)_{AB} \\ \otimes \ket{M_2x_A,M_2x_A,M_2x_B,M_2x_B,g(M_2,M_2(x_A+x_B)),g(M_2,M_2(x_A+x_B))}_{V_AV'_AV_BV'_BTT'}
\end{multline} 
is an isometry that captures the measurement through which Alice and Bob obtain the values $v_A=M_2x_A$ and $v_B=M_2x_B$ as well as the subsequent computation of the value $t=g(M_2, M_2(x_A+x_B))$ and where 
\begin{multline}
\mathcal{W} = \sum_{L,x_A,x_B} \ketbra{L}{L}_{\mathbf{L}} \otimes \left( H^{\otimes 2n} \ketbra{x_Ax_B}{x_Ax_B} H^{\otimes 2n}\right)_{AB} \\ \otimes \ket{M_3x_A,M_3x_A,M_3x_B,M_3x_B}_{W_AW'_AW_BW'_B}
\end{multline} 
is an isometry that captures the measurement through which Alice and Bob obtain the values $w_A=M_3x_A$ and $w_B=M_3x_B$. 
\end{proposition}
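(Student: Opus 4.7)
The plan is to verify the claimed decomposition by matching each step of the protocol to a corresponding piece of the right-hand side. The overarching technique is the standard dilation trick: each classical sampling step and each projective measurement is realized coherently via an isometry on enlarged registers, with the classical behaviour recovered at the end by tracing out auxiliary copy registers (the primed ones, together with $\mathbf{L}'$). First I would observe that $\ket{\mathcal{L}}$ purifies the classical mixture $\sum_L p_L \ketbra{L}{L}_{\mathbf{L}}$, so tracing out $\mathbf{L}'$ correctly implements the random choice of $L$ in step 2 while leaving the later operations correctly conditioned on $\mathbf{L}$. Next I would verify that $\mathcal{U}_{real}$, $\mathcal{V}_{real}$, and $\mathcal{W}$ are isometries by computing $\mathcal{U}_{real}^\dagger \mathcal{U}_{real}$ and the analogous products: the sums collapse by orthonormality of the ancilla kets, reducing each check to one of $\sum_L \ketbra{L}{L} = I_{\mathbf{L}}$, $\sum_{z_A, z_B} \ketbra{z_A z_B}{z_A z_B} = I_{AB}$, and $\sum_{x_A, x_B} H^{\otimes 2n} \ketbra{x_A x_B}{x_A x_B} H^{\otimes 2n} = I_{AB}$.

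Then I would translate each protocol step into an operation on the dilation. Step 3's ancilla-creation isometries, together with the computational-basis measurement of $U'_A, U'_B$ in step 4 and the computations of $u_A, u_B$ and $s$ in steps 5 and 8, are captured by $\mathcal{U}_{real}$ together with the eventual trace over $U'_A, U'_B, S'$: tracing these perfect copies dephases $U_A, U_B, S$ in the computational basis and thereby realizes the classical measurement outcomes coherently. Similarly, the $X$-basis measurement on $A, B$ in step 4, together with the computations of $v_A = M_2 x_A$, $v_B = M_2 x_B$, $t = g(M_2, v_A + v_B)$, $w_A = M_3 x_A$, and $w_B = M_3 x_B$ in steps 5 and 8, is captured by $\mathcal{V}_{real}$ and $\mathcal{W}$ together with the traces over $V'_A, V'_B, T', W'_A, W'_B$, since $H^{\otimes 2n} \ketbra{x_A x_B}{x_A x_B} H^{\otimes 2n}$ is precisely the projector onto the $X$-basis outcome $(x_A, x_B)$. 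The discarding of $A, B$ in step 6 and the erasure of the raw strings $x_A, x_B$ in step 7 correspond to the partial trace over $AB$, since the only functions of $x_A, x_B$ retained in the output are $v_A, v_B, w_A, w_B$, stored in the unprimed $V$ and $W$ ancillas.

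The main obstacle is checking consistency of the ordering, since the computational-basis projectors inside $\mathcal{U}_{real}$ do not commute with the $X$-basis projectors inside $\mathcal{V}_{real}$ and $\mathcal{W}$. The resolution is that $\mathcal{U}_{real}$ does not by itself measure $AB$: summing $\ketbra{z_A z_B}{z_A z_B}$ over all $(z_A, z_B)$ yields the identity on $AB$, and the computational-basis label of $AB$ is only coherently copied into the $U$ and $S$ ancillas. The actual collapse of $AB$ in the $X$-basis is performed when $\mathcal{V}_{real}$ is applied, and this is compatible with the earlier coherent copies into $U'_A, U'_B, S'$ because those copies are dephased by the trace of the primed registers at the end, which enforces exactly the projection onto the syndrome subspaces $L_1 z_A = u_A$ and $L_1 z_B = u_B$ that the protocol's computational-basis measurement would produce. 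Once this ordering issue is verified, the equality of the two expressions for $\mathcal{E}_{real}(\rho_{ABE})$ reduces to a direct calculation matching classical labels in the ancillas, and follows by expanding both sides in the computational basis.
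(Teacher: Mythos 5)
Your proposal is correct and follows essentially the same route as the paper's proof: express each protocol step as a Stinespring dilation (an isometry writing outcomes into a classical register and a purifying primed copy), observe that the resulting isometries are exactly $\mathcal{U}_{real}$, $\mathcal{V}_{real}$, $\mathcal{W}$, and commute all partial traces to the end, which also disposes of the ordering concern since each isometry acts trivially on the registers the others discard. One small imprecision: the cross terms in $\mathcal{U}_{real}^\dagger\mathcal{U}_{real}$ vanish because of the orthogonality of the projectors $\ketbra{L}{L}\otimes\ketbra{z_Az_B}{z_Az_B}$, not because of the ancilla kets (distinct $z_A$ with the same syndrome $L_1z_A$ produce identical ancilla kets), but the conclusion is unaffected.
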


\begin{proof}
Recall the Stinespring dilation theorem \cite{stinespring1955positive}. Systematically express each step of the protocol as an isometry followed by a partial trace. 

The step in which Alice and Bob choose the random matrix $L$ can be expressed as preparing the purification $\ket{\mathcal{L}}_{\mathbf{LL'}}$
and then taking $Tr_\mathbf{L'}$. 

The steps in which Alice and Bob apply the isometry 
\begin{equation}
\sum_{z_A,z_B} \ket{z_A,z_B,L_1z_A,L_1z_B}_{ABU'_AU'_B} \bra{z_A,z_B}_{AB}
\end{equation}
then measure registers $U'_A,U'_B$ in the computational basis, discarding the post-measurement state and keeping only the outcome, then compute the value $s$ can be expressed by the isometry $\mathcal{U}_{real}$ followed by $Tr_{S'U'_AU'_B}$. 

The steps in which Alice and Bob measure the qubits in $A,B$ in the $\ket{+}, \ket{-}$ basis obtaining $x_A,x_B$, then compute $v_A,v_B,w_A,w_B,t$, then discard the post-measurement state of the qubits in $A,B$ and the outcomes $x_A,x_B$ can be expressed by the product of isometries $\mathcal{W}\mathcal{V}_{real}$ followed by $Tr_{ABT'V'_AV'_BW'_AW'_B}$.

Finally, note that all the partial trace operations can be commuted to the end.  
\end{proof}

Next, note that $\mathcal{U}_{real}$ can be approximated by an ideal isometry followed by a simulator isometry. 

\begin{proposition}\label{prop:RealAndIdealIsometriesForBitFlipErrors}
	Let
	\begin{equation}
	\mathcal{U}_{ideal} = \sum_{\alpha,\beta} \ketbra{\psi_{\alpha\beta}}{\psi_{\alpha\beta}}_{AB} \otimes \ket{f(\alpha),f(\alpha)}_{SS'}
	\end{equation}
	This ideal isometry computes whether the number of bit flip errors is acceptable and if so it computes the entire string of bit flip error positions. 
	
	Let
	\begin{equation}
	\mathcal{U}_{simulator} = \sum_{L,z_A,z_B} \ketbra{L}{L}_{\mathbf{L}} \otimes \ketbra{z_Az_B}{z_Az_B}_{AB} \\ \otimes \ket{L_1z_A,L_1z_A,L_1z_B,L_1z_B}_{U_AU'_AU_BU'_B}
	\end{equation}
	This isometry captures the measurement through which Alice and Bob obtain the values $u_A=L_1z_A$ and $u_B=L_1z_B$. 
	
	Then:
	\begin{equation}
	\left( \bra{\mathcal{L}} \mathcal{U}_{ideal}^\dagger \mathcal{U}_{simulator}^\dagger \right) \left(\mathcal{U}_{real} \ket{\mathcal{L}}\right) \geq (1- 2^{-k+ n h(r/n)}) I_{AB}
	\end{equation}
\end{proposition}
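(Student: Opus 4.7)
The plan is to verify this operator inequality entry by entry in the computational basis of $AB$, and show that the resulting diagonal matrix has entries that Corollary~\ref{cor:ApproximateMembershipWithRandomFullRankMatrixAndHammingBall} bounds below by $1-2^{-k+nh(r/n)}$.

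First, I would write each of $(\mathcal{U}_{real} \ket{\mathcal{L}})$ and $(\mathcal{U}_{ideal} \mathcal{U}_{simulator} \ket{\mathcal{L}})$ as an isometry from $AB$ into $AB \otimes \mathbf{LL'} U_A U'_A U_B U'_B SS'$, and apply each to a computational basis vector $\ket{y_A y_B}_{AB}$. For the real side this is straightforward: it gives $\sum_L \sqrt{p_L}\,\ket{LL}\ket{y_Ay_B}\ket{L_1y_A,L_1y_A,L_1y_B,L_1y_B}\ket{g(L_1,L_1(y_A+y_B)),g(L_1,L_1(y_A+y_B))}$. For the ideal side, $\mathcal{U}_{simulator}$ acts diagonally in the computational basis, and then $\mathcal{U}_{ideal}$ projects onto the Bell states; using $\braket{\psi_{\alpha\beta}}{z_Az_B}=2^{-n/2}(-1)^{\beta\cdot z_A}\ind{z_B=z_A+\alpha}$, the action on $\ket{y_Ay_B}$ reduces to a single sum over $\beta$ with $\alpha=y_A+y_B$ fixed.

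Next, I would take the matrix element $\bra{y_Ay_B}\left(\bra{\mathcal{L}}\mathcal{U}_{ideal}^\dagger\mathcal{U}_{simulator}^\dagger\right)\left(\mathcal{U}_{real}\ket{\mathcal{L}}\right)\ket{y'_Ay'_B}$. The $\mathbf{L}$ and $U$-register overlaps collapse the two sums over $L$ and $L'$ to a single sum over $L$ weighted by $p_L$ and force agreement of $L_1y_A,L_1y_B$ with $L_1y'_A,L_1y'_B$; this is compatible with all $L$ once we force $y_A=y'_A$ and $y_B=y'_B$ via the Bell-to-computational overlap below. The sum $\sum_\beta 2^{-n}(-1)^{\beta\cdot(y_A+y'_A)}$ from the $\mathcal{U}_{ideal}^\dagger$ side yields $\ind{y_A=y'_A}$, and the indicator in the Bell overlap then forces $y_B=y'_B$ as well. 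What remains is a diagonal operator on $AB$ whose $(y_Ay_B,y_Ay_B)$ entry equals $\Pr_L\!\left(f(y_A+y_B)=g(L_1,L_1(y_A+y_B))\right)$.

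Finally, since $L$ is uniform over invertible matrices in $\F_2^{n\times n}$, Lemma~\ref{lemma:RowSubmatrixOfRandomInvertibleMatrixIsRandomFullRankMatrix} implies that $L_1$ is uniform over full-rank matrices in $\F_2^{k\times n}$, so Corollary~\ref{cor:ApproximateMembershipWithRandomFullRankMatrixAndHammingBall} with $\alpha=y_A+y_B$ gives that each diagonal entry is at least $1-2^{-k+nh(r/n)}$, yielding the claimed operator inequality.

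The main obstacle I anticipate is the bookkeeping in the second step: carefully tracking which overlaps come from which pair of registers and in what order the $\ind{\cdot}$ factors and phases cancel. In particular, making sure that the phase sum $\sum_\beta(-1)^{\beta\cdot(y_A+y'_A)}$ is correctly extracted and that the Bell indicator $\ind{y'_B=y'_A+y_A+y_B}$ is then used with $y_A=y'_A$ to give $y_B=y'_B$, so the off-diagonal entries genuinely vanish. Once that algebra is done cleanly, the bound is immediate from Corollary~\ref{cor:ApproximateMembershipWithRandomFullRankMatrixAndHammingBall}.
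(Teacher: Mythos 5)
Your proposal is correct and follows essentially the same route as the paper: both reduce the operator in question to the diagonal form $\sum_{z_A,z_B}\ketbra{z_Az_B}{z_Az_B}_{AB}\,\Pr_L\bigl(f(z_A+z_B)=g(L_1,L_1(z_A+z_B))\bigr)$ and then invoke Lemma~\ref{lemma:RowSubmatrixOfRandomInvertibleMatrixIsRandomFullRankMatrix} together with Corollary~\ref{cor:ApproximateMembershipWithRandomFullRankMatrixAndHammingBall}. The only difference is presentational: where you expand the Bell states and sum the phases $\sum_\beta(-1)^{\beta\cdot(y_A+y'_A)}$ by hand to kill the off-diagonal entries, the paper first simplifies $\mathcal{U}_{simulator}^\dagger\mathcal{U}_{real}$ and then applies Lemma~\ref{lemma:SumOfBellStateProjections} to collapse $\sum_\beta\ketbra{\psi_{\alpha\beta}}{\psi_{\alpha\beta}}$ into computational-basis projectors --- the same calculation packaged as a lemma.
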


\begin{proof}
	Simplify:
	\begin{multline}
	\mathcal{U}_{simulator}^\dagger  \mathcal{U}_{real}  
	= \sum_{L,z_A,z_B} \ketbra{L}{L}_{\mathbf{L}} \otimes \ketbra{z_Az_B}{z_Az_B}_{AB} \\ \otimes \ket{g(L_1,L_1(z_A+z_B)),g(L_1,L_1(z_A+z_B))}_{SS'}
	\end{multline}
	Therefore,
	\begin{multline}
	\left( \bra{\mathcal{L}} \mathcal{U}_{ideal}^\dagger \mathcal{U}_{simulator}^\dagger \right) \left(\mathcal{U}_{real} \ket{\mathcal{L}}\right) \\
	= \sum_{L,z_A,z_B,\alpha,\beta} p_L \ketbra{\psi_{\alpha\beta}}{\psi_{\alpha\beta}}_{AB} \ketbra{z_Az_B}{z_Az_B}_{AB} \braket{f(\alpha)}{g(L_1,L_1(z_A+z_B))}_S
	\end{multline}
	Now, apply Lemma \ref{lemma:SumOfBellStateProjections}:
	\begin{multline}
	\sum_{L,z_A,z_B,\alpha} p_L \left( \sum_{\beta}\ketbra{\psi_{\alpha\beta}}{\psi_{\alpha\beta}}_{AB}\right) \ketbra{z_Az_B}{z_Az_B}_{AB} \braket{f(\alpha)}{g(L_1,L_1(z_A+z_B))}_S \\
	=\sum_{L,z_A,z_B,\alpha,z'_A} p_L \ketbra{z'_A,z'_A+\alpha}{z'_A,z'_A+\alpha}_{AB} \ketbra{z_Az_B}{z_Az_B}_{AB} \braket{f(\alpha)}{g(L_1,L_1(z_A+z_B))}_S \\
	= \sum_{z_A,z_B} \ketbra{z_Az_B}{z_Az_B}_{AB} \sum_L p_L \braket{f(z_A+z_B)}{g(L_1,L_1(z_A+z_B))} \\
	= \sum_{z_A,z_B} \ketbra{z_Az_B}{z_Az_B}_{AB} \Pr_L(f(z_A+z_B) = g(L_1,L_1(z_A+z_B)))
	\end{multline}
	Now, the marginal distribution of $L_1$ is uniform over the rank $k$ matrices in $\F_2^{k \times n}$ because $L$ is selected uniformly among invertible matrices in $\F_2^{n \times n}$ (Lemma \ref{lemma:RowSubmatrixOfRandomInvertibleMatrixIsRandomFullRankMatrix}). Complete the proof of Proposition \ref{prop:RealAndIdealIsometriesForBitFlipErrors} by applying Corollary \ref{cor:ApproximateMembershipWithRandomFullRankMatrixAndHammingBall}. 
\end{proof}

Next, perform the same approximation for $\mathcal{V}_{real}$. 

\begin{proposition}\label{prop:RealAndIdealIsometriesForPhaseFlipErrors}
	Let
	\begin{equation}
	\mathcal{V}_{ideal} = \sum_{\alpha,\beta} \ketbra{\psi_{\alpha\beta}}{\psi_{\alpha\beta}}_{AB} \otimes \ket{f(\beta),f(\beta)}_{TT'}
	\end{equation}
	This ideal isometry computes whether the number of phase flip errors is acceptable and if so it computes the entire string of phase flip error positions. 
	
	Let
	\begin{multline}
	\mathcal{V}_{simulator} = \sum_{L,x_A,x_B} \ketbra{L}{L}_{\mathbf{L}} \otimes \left( H^{\otimes 2n} \ketbra{x_Ax_B}{x_Ax_B} H^{\otimes 2n}\right)_{AB} \\ \otimes \ket{M_2x_A,M_2x_A,M_2x_B,M_2x_B}_{V_AV'_AV_BV'_B}
	\end{multline}
	This isometry captures the measurement through which Alice and Bob obtain the values $v_A=M_2x_A$ and $v_B=M_2x_B$. 
	
	Then:
	\begin{equation}
	\left( \bra{\mathcal{L}} \mathcal{V}_{ideal}^\dagger  \mathcal{V}_{simulator}^\dagger \right) \left(\mathcal{V}_{real} \ket{\mathcal{L}}\right) \geq (1- 2^{-k+ n h(r/n)}) I_{AB}
	\end{equation}
\end{proposition}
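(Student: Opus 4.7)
The plan is to mirror the proof of Proposition \ref{prop:RealAndIdealIsometriesForBitFlipErrors} essentially verbatim, with only two substantive changes: (i) the projections on registers $AB$ are now in the Hadamard-rotated computational basis rather than the computational basis itself, so the \emph{second} relation of Lemma \ref{lemma:SumOfBellStateProjections} is invoked instead of the first; and (ii) the uniform-rank property of the relevant submatrix has to be deduced for $M_2$ rather than for $L_1$.

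First I would simplify $\mathcal{V}_{simulator}^\dagger \mathcal{V}_{real}$: the projectors $\bigl(H^{\otimes 2n}\ketbra{x_Ax_B}{x_Ax_B}H^{\otimes 2n}\bigr)_{AB}$ and the kets on registers $\mathbf{L},V_A,V_A',V_B,V_B'$ are the same in both isometries, so only the kets on $T,T'$ survive:
\begin{equation}
\mathcal{V}_{simulator}^\dagger \mathcal{V}_{real} = \sum_{L,x_A,x_B} \ketbra{L}{L}_{\mathbf{L}} \otimes \bigl(H^{\otimes 2n}\ketbra{x_Ax_B}{x_Ax_B}H^{\otimes 2n}\bigr)_{AB} \otimes \ket{g(M_2, M_2(x_A+x_B)), g(M_2, M_2(x_A+x_B))}_{TT'}.
\end{equation}
Then I would compute $(\bra{\mathcal{L}} \mathcal{V}_{ideal}^\dagger \mathcal{V}_{simulator}^\dagger)(\mathcal{V}_{real}\ket{\mathcal{L}})$, obtaining a sum over $L, x_A, x_B, \alpha, \beta$ whose $T$-register inner product is $\braket{f(\beta)}{g(M_2, M_2(x_A+x_B))}$.

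Next I would perform the sum over $\alpha$ using the second identity of Lemma \ref{lemma:SumOfBellStateProjections}, which gives $\sum_\alpha \ketbra{\psi_{\alpha\beta}}{\psi_{\alpha\beta}} = \sum_{x_A'} H^{\otimes 2n}\ketbra{x_A', x_A'+\beta}{x_A', x_A'+\beta} H^{\otimes 2n}$. Multiplying by the Hadamard-rotated projector on registers $AB$ collapses the sum to the diagonal terms where $\beta = x_A + x_B$, producing
\begin{equation}
\sum_{x_A,x_B} \bigl(H^{\otimes 2n}\ketbra{x_Ax_B}{x_Ax_B}H^{\otimes 2n}\bigr)_{AB} \Pr_L\bigl(f(x_A+x_B) = g(M_2, M_2(x_A+x_B))\bigr).
\end{equation}
Since $\{H^{\otimes 2n}\ket{x_Ax_B}\}$ is an orthonormal basis, this is a convex combination of rank-one projectors, so it is bounded below by $I_{AB}$ times the minimum of the probabilities appearing inside.

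The one step that requires a genuinely new observation, and which I expect to be the main (minor) obstacle, is establishing that the marginal distribution of $M_2$ is uniform over the full-rank matrices in $\F_2^{k \times n}$. For this I would note that the map $L \mapsto (L^{-1})^T$ is a bijection on the group of invertible matrices in $\F_2^{n\times n}$, so $M = (L^{-1})^T$ is itself uniformly distributed over invertible matrices, and Lemma \ref{lemma:RowSubmatrixOfRandomInvertibleMatrixIsRandomFullRankMatrix} applied to $M$ gives the claim. Corollary \ref{cor:ApproximateMembershipWithRandomFullRankMatrixAndHammingBall} then bounds the failure probability by $2^{-k + n h(r/n)}$, completing the proof.
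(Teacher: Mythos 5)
Your proposal is correct and follows the same route as the paper: simplify $\mathcal{V}_{simulator}^\dagger \mathcal{V}_{real}$, apply the second identity of Lemma \ref{lemma:SumOfBellStateProjections} to collapse the Bell-basis sum onto the terms with $\beta = x_A + x_B$, and then invoke the bijectivity of $L \mapsto (L^{-1})^T$ on invertible matrices together with Lemma \ref{lemma:RowSubmatrixOfRandomInvertibleMatrixIsRandomFullRankMatrix} and Corollary \ref{cor:ApproximateMembershipWithRandomFullRankMatrixAndHammingBall}. The paper's proof is just a terser version of the same argument, deferring the computation to the proof of Proposition \ref{prop:RealAndIdealIsometriesForBitFlipErrors} and stating the uniformity of $M$ without the explicit bijection remark you supply.
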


\begin{proof}	
As in the proof of Proposition \ref{prop:RealAndIdealIsometriesForBitFlipErrors}, use Lemma \ref{lemma:SumOfBellStateProjections} to compute
	\begin{multline}
	\left( \bra{\mathcal{L}} \mathcal{V}_{ideal}^\dagger  \mathcal{V}_{simulator}^\dagger \right) \left(\mathcal{V}_{real} \ket{\mathcal{L}}\right) \\
	= \sum_{x_A,x_B} \left( H^{\otimes 2n} \ketbra{x_Ax_B}{x_Ax_B} H^{\otimes 2n}\right)_{AB} \Pr_L(f(x_A+x_B) = g(M_2,M_2(x_A+x_B)))
	\end{multline}
Now, $M = (L^{-1})^T$ is uniformly distributed over invertible matrices in $\F_2^{n \times n}$, so Lemma \ref{lemma:RowSubmatrixOfRandomInvertibleMatrixIsRandomFullRankMatrix} and Corollary \ref{cor:ApproximateMembershipWithRandomFullRankMatrixAndHammingBall} complete the proof. 
\end{proof}

Next, observe that:

\begin{proposition}\label{prop:CommutingIsometries}
$\mathcal{U}_{simulator}\mathcal{V}_{real} = \mathcal{V}_{real} \mathcal{U}_{simulator}$
\end{proposition}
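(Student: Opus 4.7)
The plan is to verify the equality by expanding both compositions directly from the definitions. First, both $\mathcal{U}_{sim}$ and $\mathcal{V}_{real}$ contain the classical control $\sum_L \ketbra{L}{L}_{\mathbf{L}}$, and the orthogonality $\ketbra{L}{L}\ketbra{L'}{L'}=\delta_{L,L'}\ketbra{L}{L}$ collapses the product of two such sums to a single sum over $L$ in exactly the same way in either ordering. It therefore suffices to fix $L$ and verify equality of the resulting expressions.

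Second, I would note that the ancilla registers produced by the two isometries are disjoint: $\mathcal{U}_{sim}$ writes to $U_A U'_A U_B U'_B$ while $\mathcal{V}_{real}$ writes to $V_A V'_A V_B V'_B T T'$. When each isometry is extended trivially to act as identity on the other's ancilla registers, the ancilla kets $\ket{L_1 z_A,L_1 z_A,L_1 z_B,L_1 z_B}_U$ and $\ket{M_2 x_A,M_2 x_A,M_2 x_B,M_2 x_B,g,g}_{VT}$ appear as the same tensor factor in both orderings, so any difference between the two sides must come only from the action on $AB$.

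The remaining content is the equality on the $AB$ register: for each fixed $(L,z,x)$ I would compare $\ketbra{z_A z_B}{z_A z_B}_{AB}(H^{\otimes 2n}\ketbra{x_A x_B}{x_A x_B}H^{\otimes 2n})_{AB}$ with the product taken in the reverse order. Using the identity $\braket{z}{H^{\otimes 2n} x}=(-1)^{x\cdot z}/2^n$, each such composition is a rank-one operator on $AB$ with the same scalar prefactor $(-1)^{x\cdot z}/2^n$; the task is then to check, by tracking matrix elements between inputs in $\mathbf{L}\otimes A\otimes B$ and outputs in $\mathbf{L}\otimes A\otimes B\otimes U\otimes V\otimes T$, that after summing over all $z,x$ and attaching the distinguishable ancilla kets, the two orderings produce the same isometry.

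The main obstacle is precisely this last step: the two term-by-term operators on $AB$ are adjoints of each other rather than equal, so the required equality of the full isometries cannot be read off pointwise but must follow from the combined sum structure and the distinguishability of the ancilla values $\ket{u_L(z)}$ and $\ket{v_L(x)}$. Once the matrix elements on each side are written out carefully using $\braket{z}{H^{\otimes 2n} x}$, the claim reduces to a bookkeeping check that I expect to be routine but cannot skip.
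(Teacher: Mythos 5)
Your reductions are fine as far as they go: the classical control on $\mathbf{L}$ collapses correctly, the ancilla registers are disjoint, and you are right that the term-by-term products on $AB$ in the two orderings are adjoints of each other rather than equal, so the equality can only hold after summing. But the step you defer as ``routine bookkeeping'' is precisely the mathematical content of the proposition, and it is not routine: it requires a structural input that your proposal never identifies. If you replace $M_2$ by an arbitrary full-rank matrix, every step you describe goes through unchanged, yet the two isometries would \emph{not} commute. So no amount of matrix-element bookkeeping of the form you outline can close the argument; you are missing the ingredient that distinguishes the true statement from a false one.

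The missing ingredient is the relation $M=(L^{-1})^T$, which gives $L_1 M_2^T=0$ (it is an off-diagonal block of $LM^T=I$). The paper's proof uses this as follows: group the sum over $z_A,z_B$ by the syndrome values $u_A=L_1z_A$, $u_B=L_1z_B$, so that $\mathcal{U}_{simulator}$ acts on $AB$ through the projections $P(L_1\vec{\sigma}_3,u_A)\otimes P(L_1\vec{\sigma}_3,u_B)=\sum_{z:L_1z=u}\ketbra{z}{z}\otimes\cdots$, and likewise group $\mathcal{V}_{real}$ by $v=M_2x$ into $P(M_2\vec{\sigma}_1,v)$ (Lemma \ref{lemma:RelationOfMeasurementProjectionsForTwoLinearlyRelatedTuples}). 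The ancilla kets depend only on $u$ and $v$, so they are constant on each group, and the commutation of the full isometries reduces to commutation of these coarser projections. Those commute because the underlying Pauli operators $\Z^{(L_1)_i}$ and $\X^{(M_2)_j}$ commute exactly when $(L_1)_i\,(M_2)_j^T=0$, i.e.\ because of $L_1M_2^T=0$. Your proposal stops just short of this: it recognizes that the sum structure and the distinguishability of the ancilla values must be exploited, but without the grouping into stabilizer projections and the orthogonality of $L_1$ and $M_2$ the claim does not follow.
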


\begin{proof}	
	Rewrite:
	\begin{multline}
	\mathcal{U}_{simulator} = \sum_{L,z_A,z_B} \ketbra{L}{L}_{\mathbf{L}} \otimes \ketbra{z_Az_B}{z_Az_B}_{AB}  \otimes \ket{L_1z_A,L_1z_A,L_1z_B,L_1z_B}_{U_AU'_AU_BU'_B} \\
	= \sum_{L,u_A,u_B} \ketbra{L}{L}_{\mathbf{L}} \otimes \ket{u_A,u_A,u_B,u_B}_{U_AU'_AU_BU'_B} \\
	\otimes \left(\sum_{z_A: L_1z_A=u_A} \ketbra{z_A}{z_A}\right)_A \otimes \left(\sum_{z_B: L_1z_B=u_B} \ketbra{z_B}{z_B}\right)_B \\
	=  \sum_{L,u_A,u_B} \ketbra{L}{L}_{\mathbf{L}} \otimes \ket{u_A,u_A,u_B,u_B}_{U_AU'_AU_BU'_B} 
	\otimes P(L_1(\vec{\sigma}_3),u_A)_A \otimes P(L_1(\vec{\sigma}_3),u_B)_B
	\end{multline}
	where the last step uses Lemma \ref{lemma:RelationOfMeasurementProjectionsForTwoLinearlyRelatedTuples} and the notation of Section \ref{subsec:ThePauliGroupAndTheBellBasis} for the tuple $\vec{\sigma}_3$ of single qubit $\sigma_3$ operations. 
	
	Similarly, rewrite
	\begin{multline}
	\mathcal{V}_{Real} = \sum_{L,x_A,x_B} \ketbra{L}{L}_{\mathbf{L}} \otimes \left( H^{\otimes 2n} \ketbra{x_Ax_B}{x_Ax_B} H^{\otimes 2n}\right)_{AB} \\ \otimes \ket{M_2x_A,M_2x_A,M_2x_B,M_2x_B,g(M_2,M_2(x_A+x_B)),g(M_2,M_2(x_A+x_B))}_{V_AV'_AV_BV'_BTT'} \\
	= \sum_{L,v_A,v_B} \ketbra{L}{L}_{\mathbf{L}} \otimes P(M_2(\vec{\sigma}_1),v_A)_A \otimes P(M_2(\vec{\sigma}_1),v_B)_B \\ \otimes \ket{v_A,v_A,v_B,v_B,g(M_2,v_A+v_B),g(M_2,v_A+v_B}_{V_AV'_AV_BV'_BTT'} 
	\end{multline}
	where $\vec{\sigma}_1$ is the tuple of single qubit $\sigma_1$ operations. 
	
	Proposition \ref{prop:CommutingIsometries} now follows by observing that the elements of the two tuples $L_1(\vec{\sigma}_3)$ and $M_2(\vec{\sigma}_1)$ commute and therefore for all $u,v$, the corresponding projections $P(L_1(\vec{\sigma}_3),u)$ and $P(M_2(\vec{\sigma}_1),v)$ also commute. 
\end{proof}

Next, use propositions \ref{prop:TheStateAtTheEndOfTheProtocol}, \ref{prop:RealAndIdealIsometriesForBitFlipErrors}, \ref{prop:RealAndIdealIsometriesForPhaseFlipErrors}, \ref{prop:CommutingIsometries} to construct an ideal transformation that approximates $\mathcal{E}_{real}$:

\begin{proposition}\label{prop:IdealCPTPmap}
	Let $\mathcal{E}_{ideal}$ be the transformation that prepares $\ket{\mathcal{L}}$, then applies isometries $\mathcal{U}_{ideal}$, $\mathcal{V}_{ideal}$, $\mathcal{V}_{simulator}$, $\mathcal{U}_{simulator}$, $\mathcal{W}$, and finally applies $Tr_{AB\mathbf{L'}S'T'U'_AU'_BV'_AV'_BW'_AW'_B}$. Then, the diamond distance of $\mathcal{E}_{real}$ and $\mathcal{E}_{ideal}$ is at most $2^{-k/2 + n h(r/n) / 2 + 3/2}$.
\end{proposition}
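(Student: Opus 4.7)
The plan is to express both $\mathcal{E}_{real}$ and $\mathcal{E}_{ideal}$ as Stinespring dilations---isometries from $AB$ into an extended space, followed by a partial trace---and to bound the diamond distance by controlling the vector distance of the two dilating isometries on arbitrary inputs. Proposition \ref{prop:TheStateAtTheEndOfTheProtocol} already puts $\mathcal{E}_{real}$ in this form with dilating isometry $V_{real} = \mathcal{W}\mathcal{V}_{real}\mathcal{U}_{real}\ket{\mathcal{L}}$, and $\mathcal{E}_{ideal}$ is by construction of the same form with $V_{ideal} = \mathcal{W}\mathcal{U}_{simulator}\mathcal{V}_{simulator}\mathcal{V}_{ideal}\mathcal{U}_{ideal}\ket{\mathcal{L}}$.

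I would interpolate via the hybrid $V_{hyb} = \mathcal{W}\mathcal{V}_{real}\mathcal{U}_{simulator}\mathcal{U}_{ideal}\ket{\mathcal{L}}$. The bit-flip leg $V_{real} \to V_{hyb}$ is controlled by Proposition \ref{prop:RealAndIdealIsometriesForBitFlipErrors}: stripping off the common isometric prefix $\mathcal{W}\mathcal{V}_{real}$ (which preserves norm), the operator inequality $(\mathcal{U}_{simulator}\mathcal{U}_{ideal}\ket{\mathcal{L}})^\dagger(\mathcal{U}_{real}\ket{\mathcal{L}}) \geq (1-\epsilon)I_{AB}$ with $\epsilon = 2^{-k + nh(r/n)}$ expands into $\|(V_{real} - V_{hyb})\ket{\phi}\|^2 \leq 2\epsilon$ for every unit vector $\ket{\phi}$ on $AB$ tensored with any reference system. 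For the phase-flip leg $V_{hyb} \to V_{ideal}$, I would first invoke Proposition \ref{prop:CommutingIsometries} to rewrite $V_{hyb} = \mathcal{W}\mathcal{U}_{simulator}\mathcal{V}_{real}\mathcal{U}_{ideal}\ket{\mathcal{L}}$, then strip off $\mathcal{W}\mathcal{U}_{simulator}$ and apply Proposition \ref{prop:RealAndIdealIsometriesForPhaseFlipErrors} to obtain $\|(V_{hyb} - V_{ideal})\ket{\phi}\|^2 \leq 2\epsilon$, treating the $SS'$ registers freshly populated by $\mathcal{U}_{ideal}$ as an additional piece of reference system.

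Combining via the triangle inequality gives $\|(V_{real} - V_{ideal})\ket{\phi}\| \leq 2\sqrt{2\epsilon}$. Promoting to density matrices through $\|\ket{\psi_1}\bra{\psi_1} - \ket{\psi_2}\bra{\psi_2}\|_1 \leq 2\|\ket{\psi_1} - \ket{\psi_2}\|$ gives $4\sqrt{2\epsilon}$; the partial trace is non-increasing in $\|\cdot\|_1$; and the convention (used already in the proof of Lemma \ref{lemma:RelaxingImplicitAssumptionsInEpsilonSecurity}) that trace distance means $\frac{1}{2}\|\cdot\|_1$ halves this, yielding the claimed bound $2\sqrt{2\epsilon} = 2^{-k/2 + nh(r/n)/2 + 3/2}$.

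The one delicate point I expect is justifying the phase-flip step: Proposition \ref{prop:RealAndIdealIsometriesForPhaseFlipErrors} gives an operator inequality on $AB$ alone, and one must check that it continues to hold after $\mathcal{U}_{ideal}$ has already acted, entangling $AB$ with the fresh $SS'$ registers. This is immediate because $\mathcal{U}_{ideal}$ touches neither the $\mathbf{LL'}$ factor (so the $\ket{\mathcal{L}}$ prepared at the start is intact) nor any register on which the two $\mathcal{V}$-isometries act nontrivially, so the $I_{AB}$ lower bound simply tensors through the enlarged reference. The rest is standard bookkeeping with isometries and the triangle inequality.
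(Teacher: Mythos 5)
Your proposal is correct and follows essentially the same route as the paper: the same hybrid $\mathcal{W}\mathcal{V}_{real}\mathcal{U}_{simulator}\mathcal{U}_{ideal}\ket{\mathcal{L}}$, the same use of Propositions \ref{prop:RealAndIdealIsometriesForBitFlipErrors}, \ref{prop:RealAndIdealIsometriesForPhaseFlipErrors} and \ref{prop:CommutingIsometries}, and the same final constant. The only cosmetic difference is that the paper converts each leg's overlap bound to trace distance via $\sqrt{1-F^2}$ and applies the triangle inequality there, while you apply the triangle inequality in the Hilbert-space norm and convert once at the end; both give $2\sqrt{2\epsilon}=2^{-k/2+nh(r/n)/2+3/2}$.
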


\begin{proof}
	Take any input state $\rho_{ABE}$ and purify it to $\ket{\phi}_{ABEE'}$. From Proposition \ref{prop:RealAndIdealIsometriesForBitFlipErrors} deduce that the fidelity of  $\mathcal{V}_{real} \mathcal{U}_{simulator}\mathcal{U}_{ideal} \ket{\phi} \ket{\mathcal{L}}$ and $ \mathcal{V}_{real} \mathcal{U}_{real} \ket{\phi} \ket{\mathcal{L}}$ is at least $1- 2^{-k+nh(r/n)}$. Using the relation of fidelity and trace distance for pure states \cite[Equation 9.99]{nielsen2012quantum}, the trace distance between these two states is 
	\begin{equation}
	\sqrt{1 - (1 - 2^{-k+nh(r/n)})^2} \leq 2^{-k/2 + n h(r/n) / 2 + 1/2}
	\end{equation}
	Next, from Proposition \ref{prop:CommutingIsometries} deduce
	\begin{equation}
	\mathcal{V}_{real} \mathcal{U}_{simulator}\mathcal{U}_{ideal} \ket{\phi} \ket{\mathcal{L}} =  \mathcal{U}_{simulator} \mathcal{V}_{real} \mathcal{U}_{ideal} \ket{\phi} \ket{\mathcal{L}}
	\end{equation}
	Next, from Proposition \ref{prop:RealAndIdealIsometriesForPhaseFlipErrors} deduce that the fidelity of 
$$ \mathcal{U}_{simulator} \mathcal{V}_{real} \mathcal{U}_{ideal} \ket{\phi} \ket{\mathcal{L}}$$
and 
$$\mathcal{U}_{simulator} \mathcal{V}_{simulator} \mathcal{V}_{ideal} \mathcal{U}_{ideal} \ket{\phi} \ket{\mathcal{L}} $$ 
is at least $1- 2^{-k+nh(r/n)}$, so the trace distance between them is at most $2^{-k/2 + n h(r/n) / 2 + 1/2}$. Finally, from Proposition \ref{prop:TheStateAtTheEndOfTheProtocol}, the triangle inequality and monotonicity of the trace distance deduce that the trace distance between $\mathcal{E}_{real}(\rho)$ and $\mathcal{E}_{ideal}(\rho)$ is at most $2^{-k/2 + n h(r/n) / 2 + 3/2}$.
\end{proof}

Next, compute the output state of $\mathcal{E}_{ideal}$: 

\begin{proposition}\label{prop:OutputOfIdealTransformation}
Take any input state $\rho_{ABE}$ and purify it to $\ket{\phi}_{ABEE'}$. Expand $\phi$ in the Bell basis for Alice and Bob:
\begin{equation}
\ket{\phi}_{ABEE'} = \sum_{\alpha, \beta \in \F_2^n} \ket{\psi_{\alpha\beta}}_{AB} \otimes \ket{\gamma_{\alpha\beta}}_{EE'}
\end{equation}
where $\ket{\gamma_{\alpha\beta}}$ are vectors in Eve's space that satisfy 
\begin{equation}
\sum_{\alpha, \beta \in \F_2^n} \braket{\gamma_{\alpha\beta}}{\gamma_{\alpha\beta}} = 1
\end{equation}
Then, there exists $\sigma^{reject}_{LEE'STU_AV_AW_AU_BV_BW_B}$ that is classical on registers $LSTU_AV_AW_AU_BV_BW_B$ and such that at least one of $ST$ contains $\bot$ and such that
\begin{multline}
\mathcal{E}_{ideal}(\ketbra{\phi}{\phi}) = \sigma^{reject}_{LEE'STU_AV_AW_AU_BV_BW_B}\\+ \sum_{L,u_A,v_A,w_A,\alpha,\beta: \alpha, \beta \in B_n(0,r)} p_L \ketbra{L}{L}_{\mathbf{L}}  \otimes \ketbra{\gamma_{\alpha\beta}}{\gamma_{\alpha\beta}}_{EE'}  \otimes \ketbra{\alpha,\beta}{\alpha,\beta}_{ST} \\
	\otimes 2^{-n} \ketbra{u_A,v_A,w_A}{u_A,v_A,w_A}_{U_AV_AW_A} \\
	\otimes \ketbra{u_A + L_1\alpha,v_A + M_2 \beta, w_A + M_3\beta}{u_A + L_1 \alpha,v_A + M_2 \beta, w_A + M_3\beta}_{U_BV_BW_B}
\end{multline}
\end{proposition}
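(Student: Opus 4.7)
The plan is to propagate the purified input state $\ket{\phi}_{ABEE'} \otimes \ket{\mathcal{L}}_{\mathbf{LL'}}$ through the five isometries of $\mathcal{E}_{ideal}$ in order, then perform the partial trace. Expanding $\ket{\phi}$ in the Bell basis on $AB$ makes the first two isometries transparent: $\mathcal{U}_{ideal}$ and $\mathcal{V}_{ideal}$ are Bell-basis projections that leave classical copies of $f(\alpha)$ and $f(\beta)$ in $SS'$ and $TT'$, producing the state
\begin{equation}
\sum_L \sqrt{p_L}\ket{LL}_{\mathbf{LL'}} \sum_{\alpha,\beta}\ket{\psi_{\alpha\beta}}_{AB}\ket{\gamma_{\alpha\beta}}_{EE'}\ket{f(\alpha),f(\alpha)}_{SS'}\ket{f(\beta),f(\beta)}_{TT'}.
\end{equation}

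The heart of the argument is to analyse $\mathcal{W}\,\mathcal{U}_{simulator}\,\mathcal{V}_{simulator}$ applied to a Bell state. Regrouping as in the proof of Proposition \ref{prop:CommutingIsometries}, each of these three isometries is a Pauli syndrome measurement on both $A$ and $B$: $\mathcal{V}_{simulator}$ measures $M_2\vec{\sigma}_1$, $\mathcal{U}_{simulator}$ measures $L_1\vec{\sigma}_3$, and $\mathcal{W}$ measures $M_3\vec{\sigma}_1$. Since $LM^T=I$ gives $L_1 M_2^T = L_1 M_3^T = 0$, all three groups commute pairwise, and the composition is equivalent to the joint measurement of the $n$-element tuple $\vec{g} = (L_1\vec{\sigma}_3, M_2\vec{\sigma}_1, M_3\vec{\sigma}_1)$ on each side. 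A direct computation gives
\begin{equation}
\mathcal{F}(\vec{g})\mathcal{S} = \begin{pmatrix} L_1 & 0 \\ 0 & M_2 \\ 0 & M_3 \end{pmatrix},
\end{equation}
which has rank $n$ because $L_1$ and $\begin{pmatrix}M_2 \\ M_3\end{pmatrix}$ are full rank in disjoint column blocks, so the projections $P(\vec{g},x)$ are rank $1$. Lemma \ref{lemma:ActionOfPauliMeasurementOnBellBasis} then yields
\begin{equation}
(P(\vec{g},(u_A,v_A,w_A))\otimes P(\vec{g},(u_B,v_B,w_B)))\ket{\psi_{\alpha\beta}} = \ind{(u_B,v_B,w_B) = (u_A+L_1\alpha,v_A+M_2\beta,w_A+M_3\beta)}(P(\vec{g},(u_A,v_A,w_A))\otimes I)\ket{\psi_{\alpha\beta}},
\end{equation}
so Bob's outcome is locked to Alice's with exactly the shift appearing in the proposition. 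A short calculation using Lemma \ref{lemma:PropertiesOfMaximallyEntangledState} gives $\|(P(\vec{g},x)\otimes I)\ket{\psi_{\alpha\beta}}\|^2 = 2^{-n}\mathrm{Tr}(P(\vec{g},x)) = 2^{-n}$, accounting for the uniform weight $2^{-n}$ on $U_AV_AW_A$ after tracing out $AB$.

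Finally I split the state into accept and reject. Tracing $S'$ contributes the factor $\ind{f(\alpha)=f(\alpha')}$; by the definition of $f$ this forces $\alpha=\alpha'$ whenever at least one of the two lies in $B_n(0,r)$, and similarly for $\beta,\beta'$ through $T'$. The accept terms, where both $\alpha,\beta\in B_n(0,r)$, are therefore diagonal in $(\alpha,\beta)$ and reproduce the displayed formula with Eve state $\ketbra{\gamma_{\alpha\beta}}{\gamma_{\alpha\beta}}$. All other surviving terms have at least one of $\alpha,\beta$ outside $B_n(0,r)$, so at least one of the $S,T$ registers contains $\bot$; since $S$ and $T$ are classical in the output, these terms assemble into a positive $\sigma^{reject}$ of the required form. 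The principal obstacle is recognising that the three syndrome measurements can be bundled into the single joint measurement of $\vec{g}$ and computing $\mathcal{F}(\vec{g})\mathcal{S}$ correctly; once this is done, Lemma \ref{lemma:ActionOfPauliMeasurementOnBellBasis} does essentially all the remaining work.
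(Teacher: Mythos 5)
Your proposal is correct and follows essentially the same route as the paper's proof: expand the input in the Bell basis, bundle the three simulator/key measurements into a single joint rank-one Pauli measurement of $(L_1\vec{\sigma}_3, M_2\vec{\sigma}_1, M_3\vec{\sigma}_1)$ via Lemma \ref{lemma:RelationOfMeasurementProjectionsForTwoLinearlyRelatedTuples}, apply Lemma \ref{lemma:ActionOfPauliMeasurementOnBellBasis} to lock Bob's outcomes to Alice's, get the $2^{-n}$ weight from Lemma \ref{lemma:PropertiesOfMaximallyEntangledState}, and separate accept from reject using orthogonality of the traced-out $S'T'$ copies. Your explicit computation of $\mathcal{F}(\vec{g})\mathcal{S}$ and the rank argument are correct and merely make explicit what the paper leaves implicit.
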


\begin{proof}
	
	Simplify:
	\begin{equation}
	\mathcal{V}_{ideal} \mathcal{U}_{ideal} = \sum_{\alpha,\beta} \ketbra{\psi_{\alpha\beta}}{\psi_{\alpha\beta}} \otimes \ket{f(\alpha),f(\alpha),f(\beta),f(\beta)}_{SS'TT'}
	\end{equation}
	Also,  
	\begin{multline}
	\mathcal{W}\mathcal{V}_{simulator}\mathcal{U}_{simulator} = \sum_{L,u_A,u_B,v_A,v_B,w_A,w_B}  \ketbra{L}{L}_{\mathbf{L}} \\
	\otimes P\left(\colvecthree{L_1 \vec{\sigma}_3}{M_2 \vec{\sigma}_1}{M_3\vec{\sigma}_1},\colvecthree{u_A}{v_A}{w_A}\right)_A \otimes P\left(\colvecthree{L_1 \vec{\sigma}_3}{M_2 \vec{\sigma}_1}{M_3\vec{\sigma}_1},\colvecthree{u_B}{v_B}{w_B}\right)_B \\
	\otimes \ket{u_A,u_A,u_B,u_B,v_A,v_A,v_B,v_B,w_A,w_A,w_B,w_B}_{U_AU'_AU_BU'_BV_AV'_AV_BV'_BW_AW'_AW_BW'_B}
	\end{multline}
	using the notation of section \ref{subsec:ThePauliGroupAndTheBellBasis}, the observation that the elements of the three tuples $L_1 \vec{\sigma}_3, M_2 \vec{\sigma}_1, M_3 \vec{\sigma}_1$ are independent and commute, and Lemma \ref{lemma:RelationOfMeasurementProjectionsForTwoLinearlyRelatedTuples}. 
	
	Next, use Lemma \ref{lemma:ActionOfPauliMeasurementOnBellBasis} to deduce that 
	\begin{multline}
	\mathcal{W}\mathcal{V}_{simulator}\mathcal{U}_{simulator}\mathcal{V}_{ideal} \mathcal{U}_{ideal} \ket{\phi}\ket{\mathcal{L}} \\
	= \sum_{L,u_A,v_A,w_A,\alpha,\beta} \sqrt{p_L} \ket{L,L}_{\mathbf{LL'}} \otimes P\left(\colvecthree{L_1 \vec{\sigma}_3}{M_2 \vec{\sigma_1}}{M_3\vec{\sigma}_1},\colvecthree{u_A}{v_A}{w_A}\right)_A  \ket{\psi_{\alpha\beta}}_{AB} \otimes \ket{\gamma_{\alpha\beta}}_{EE'} \\
	\otimes \ket{u_A,u_A,u_A + L_1 \alpha,u_A + L_1 \alpha}_{U_AU'_AU_BU'_B} \\
	\otimes \ket{v_A,v_A,v_A + M_2\beta,v_A + M_2\beta}_{V_AV'_AV_BV'_B} \\
	\otimes \ket{w_A,w_A,w_A + M_3\beta,w_A + M_3\beta}_{W_AW'_AW_BW'_B} \\
	\otimes \ket{f(\alpha),f(\alpha),f(\beta),f(\beta)}_{SS'TT'}
	\end{multline}
	
	Next, break this up into a sum of two sub-normalized vectors $\ket{\tau_{accept}}$ and $\ket{\tau_{reject}}$, where $\ket{\tau_{accept}}$ contains those terms of the sum with $\alpha,\beta \in B_n(0,r)$ and $\ket{\tau_{reject}}$ contains all other terms of the sum. Note that $Tr_{S'T'} \ket{\tau_{accept}} \bra{\tau_{reject}}=0$ and deduce
	\begin{multline}
	\mathcal{E}^{ideal}(\ketbra{\phi}{\phi}) = Tr_{AB\mathbf{L'}S'T'U'_AU'_BV'_AV'_BW'_AW'_B} \ketbra{\tau_{accept}}{\tau_{accept}} \\
	+ Tr_{AB\mathbf{L'}S'T'U'_AU'_BV'_AV'_BW'_AW'_B} \ketbra{\tau_{reject}}{\tau_{reject}} 
	\end{multline}

Take \begin{equation} \sigma^{reject}_{LEE'STU_AV_AW_AU_BV_BW_B}= Tr_{AB\mathbf{L'}S'T'U'_AU'_BV'_AV'_BW'_AW'_B} \ketbra{\tau_{reject}}{\tau_{reject}} \end{equation}
	
	Finally, simplify and use Lemma \ref{lemma:PropertiesOfMaximallyEntangledState} to deduce that 
	\begin{multline}
	Tr_{AB\mathbf{L'}S'T'U'_AU'_BV'_AV'_BW'_AW'_B} \ketbra{\tau_{accept}}{\tau_{accept}} \\
	= \sum_{L,u_A,v_A,w_A,\alpha,\beta: \alpha, \beta \in B_n(0,r)} p_L \ketbra{L}{L}_{\mathbf{L}} \\
	\otimes \ketbra{\gamma_{\alpha\beta}}{\gamma_{\alpha\beta}}_{EE'} \left(\bra{\psi_{\alpha\beta}} P\left(\colvecthree{L_1 \vec{\sigma}_3}{M_2 \vec{\sigma_1}}{M_3\vec{\sigma}_1},\colvecthree{u_A}{v_A}{w_A}\right)_A \ket{\psi_{\alpha\beta}}\right) \\
	\otimes \ketbra{u_A,v_A,w_A}{u_A,v_A,w_A}_{U_AV_AW_A} \\
	\otimes \ketbra{u_A + L_1\alpha,v_A + M_2 \beta, w_A + M_3\beta}{u_A + L_1 \alpha,v_A + M_2 \beta, w_A + M_3\beta}_{U_BV_BW_B}\\
	\otimes \ketbra{\alpha,\beta}{\alpha,\beta}_{ST} \\
	= \sum_{L,u_A,v_A,w_A,\alpha,\beta: \alpha, \beta \in B_n(0,r)} p_L \ketbra{L}{L}_{\mathbf{L}}  \otimes \ketbra{\gamma_{\alpha\beta}}{\gamma_{\alpha\beta}}_{EE'}  \otimes \ketbra{\alpha,\beta}{\alpha,\beta}_{ST} \\
	\otimes 2^{-n} \ketbra{u_A,v_A,w_A}{u_A,v_A,w_A}_{U_AV_AW_A} \\
	\otimes \ketbra{u_A + L_1\alpha,v_A + M_2 \beta, w_A + M_3\beta}{u_A + L_1 \alpha,v_A + M_2 \beta, w_A + M_3\beta}_{U_BV_BW_B}
	\end{multline}
	which completes the proof. 	
\end{proof}

Finally, note that for any input state $\rho_{ABE}$, applying the final step of the protocol (the correction of $w_A,w_B$ depending on the values $s,t$) to $\mathcal{E}_{ideal}(\rho)$ produces an ideal state that satisfies the assumptions of Lemma \ref{lemma:RelaxingImplicitAssumptionsInEpsilonSecurity} with $\epsilon=2^{-k/2 + n h(r/n) / 2 + 3/2}$; therefore the protocol is $2^{-k/2 + n h(r/n) / 2 + 5/2}$ secure. Moreover, for any input $\rho_{ABE}=Tr_{E'} \ketbra{\phi}{\phi}_{ABEE'}$, the probability that the protocol accepts is within $2^{-k/2 + n h(r/n) / 2 + 3/2}$ of \begin{equation} \sum_{\alpha,\beta \in B_n(0,r)} \braket{\gamma_{\alpha\beta}}{\gamma_{\alpha\beta}} = Tr \Pi_{n,r} \rho_{AB} \Pi_{n,r} \end{equation} This completes the proof of Theorem \ref{thm:SecurityOfTheQKDProtocol}.

\section{Comparison with previous work}\label{sec:ComparisonWithPreviousWork}

The introduction illustrated the advantage of two-universal hashing over random sampling using specific examples. This section reveals the general pattern behind the examples in the introduction. To study the advantage of the two-universal hashing protocol for all block sizes, fix values for the tolerated error rate and security level, and consider key rate as a function of block size. How fast does key rate converge to the asymptotic value as block size goes to infinity? Subsection \ref{subsec:KeyRateFor2UH} gives the rate of convergence for the two-universal hashing protocol. Subsection \ref{subsec:KeyRateForRandomSampling} gives a bound on the rate of convergence of the random sampling protocol. 

\subsection{Key rate of the two-universal hashing protocols $\pi(n,k,r)$}\label{subsec:KeyRateFor2UH}

Given $n$ qubits per side, the target to tolerate $\delta n$ bit flip and $\delta n$ phase flip errors, and a target security parameter $\epsilon$, it suffices to choose $k=\lceil n h(\delta) + 2\log_2(1/\epsilon) + 5 \rceil $. The key rate $1-2k/n$ then satisfies: 
\begin{equation} 1-2h(\delta) - \frac{4 \log_2(1/\epsilon) + 12}{n} \leq 1-\frac{2k}{n} \leq 1-2h(\delta) - \frac{4 \log_2(1/\epsilon) + 10}{n} \end{equation}
Therefore, the rate of convergence of the finite to the asymptotic rate is of the form $cn^{-1}$. 

\subsection{Key rate of the random sampling protocols}\label{subsec:KeyRateForRandomSampling}

The sequence of works \cite{tomamichel2012tight,leverrier2017largely,lim2021security} develops QKD protocols and security proofs optimized for the finite key regime. The current evolution of the entanglement-based protocol can be found in \cite[Section 3]{leverrier2017largely}; the difference between \cite{lim2021security} and \cite{leverrier2017largely} is only in the random sampling tail bound that is used. For comparison with the present work we take only the case of perfect measurements in the rectilinear and diagonal basis. A summary of the protocol in this case is as follows:
\begin{enumerate}
\item Eve prepares a state of $2n$ qubits and sends $n$ to Alice and $n$ to Bob. 
\item Alice and Bob agree on a uniformly random choice of either the rectilinear or the diagonal basis measurement for each pair of qubits. 
\item Alice and Bob select a uniformly random subset of $n_{pe}$ positions to serve for parameter estimation, leaving the remaining $n_{rk}=n-n_{pe}$ to serve as the raw key. 
\item Alice and Bob compare their outcomes on the parameter estimation positions. If the error rate on these positions exceeds a threshold $\delta$, Alice and Bob abort. 
\item Alice sends a syndrome of her raw key to Bob, and a two-universal hash of her raw key to Bob. Bob uses the syndrome to correct his raw key, and uses the hash to verify that the correction was successful. For simplicity, take the combined length of syndrome and hash to be the theoretical minimum $n_{rk}h(\delta) - \log_2 (\epsilon_{ec})$, where $\epsilon_{ec}$ is the desired bound on the probability that the hash test passes but Bob's corrected raw key does not match Alice's. 
\item Alice and Bob compress their raw keys to shorter output keys of length $n_{out}$ using a two-universal family of hash functions. 
\end{enumerate}

The security $\epsilon_{qkd}$ of these protocols can be written in the form \begin{equation}\epsilon_{qkd} = \epsilon_{ec} + \inf_{0<\nu<1/2-\delta} \left(\epsilon_{pa}(\nu) + \epsilon_{pe}(\nu)\right) \end{equation} where $\epsilon_{ec}$ is the desired bound on the correctness of the protocol, where \begin{equation} \epsilon_{pa}(\nu) =\frac{1}{2\sqrt{\epsilon_{ec}}} 2^{(-n_{rk}(1-h(\delta+\nu)-h(\delta))+n_{out})/2}\end{equation} is a bound on the secrecy of the protocol, and where \begin{equation}  \epsilon_{pe}(\nu)=\inf_{0<\xi<\nu} \epsilon_{pe}(\nu,\xi) \end{equation} comes from a tail bound for random sampling. The precise form of the function $ \epsilon_{pe}(\nu,\xi) $ is given in \cite[Lemma 2]{lim2021security} and satisfies the equation \begin{equation} \left(\frac{\epsilon_{pe}(\nu,\xi)}{2}\right)^2= exp\left(-\frac{2nn_{pe}\xi^2}{n_{rk}+1}\right)+exp\left(-\frac{2(n+2)(n_{rk}^2(\nu-\xi)^2-1)}{(n(\delta+\xi)+1)(n(1-\delta-\xi)+1)}\right) \end{equation}
For the purpose of this section, consider the following lower bound on $\epsilon_{pe}(\nu)$:

\begin{lemma}\label{lemma:LowerBoundOnLimEtAlFunction}
Suppose $n_{rk} \geq n/2$. Then, \begin{equation} \epsilon_{pe}(\nu) \geq 2 exp(-2 n_{pe} \nu^2) \end{equation}
\end{lemma}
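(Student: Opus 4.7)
The plan is to dispense with the infimum structure by showing a bound that holds for every admissible $\xi$, and to handle the two-term sum in the definition of $\epsilon_{pe}(\nu,\xi)^2/4$ by simply discarding the second (non-negative) exponential. The first term alone, after using $n_{rk}\geq n/2$, will already exceed $\exp(-4n_{pe}\nu^2)$ uniformly in $\xi\in(0,\nu)$, which is exactly what is needed.

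Concretely, I would proceed in three short steps. First, drop the second exponential to get
$$\left(\frac{\epsilon_{pe}(\nu,\xi)}{2}\right)^2 \geq \exp\!\left(-\frac{2nn_{pe}\xi^2}{n_{rk}+1}\right).$$
Second, use the hypothesis $n_{rk}\geq n/2$ (so $n_{rk}+1>n/2$) to bound the exponent as
$$\frac{2nn_{pe}\xi^2}{n_{rk}+1} \;<\; \frac{2nn_{pe}\xi^2}{n/2} \;=\; 4n_{pe}\xi^2,$$
and then use $\xi<\nu$ to replace $\xi^2$ by $\nu^2$, yielding $\bigl(\epsilon_{pe}(\nu,\xi)/2\bigr)^2 \geq \exp(-4n_{pe}\nu^2)$. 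Third, take square roots to obtain $\epsilon_{pe}(\nu,\xi) \geq 2\exp(-2n_{pe}\nu^2)$ for every $\xi\in(0,\nu)$, and finally take the infimum over $\xi$ to conclude $\epsilon_{pe}(\nu)\geq 2\exp(-2n_{pe}\nu^2)$.

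There is no real obstacle here: the apparently complicated form of $\epsilon_{pe}(\nu,\xi)$ is handled by the observation that one can afford to throw away the finite-population correction term entirely, because the first Hoeffding-type term already absorbs the factor of $2$ loss coming from $n/(n_{rk}+1)\leq 2$. The only slightly delicate point is verifying that $n/(n_{rk}+1)\leq 2$ really follows from $n_{rk}\geq n/2$, which is immediate. Consequently the argument is a short chain of elementary inequalities with no case analysis in $\xi$.
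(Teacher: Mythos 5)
Your proposal is correct and follows essentially the same route as the paper's own proof: bound the first exponent by $4n_{pe}\nu^2$ using $n_{rk}+1 > n/2$ and $\xi<\nu$, discard the nonnegative second exponential, and take square roots before passing to the infimum over $\xi$. You merely spell out the last two steps that the paper compresses into ``The lemma follows.''
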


\begin{proof}

Take any $\xi \in (0,\nu)$. Note that \begin{equation} \frac{2nn_{pe}\xi^2}{n_{rk}+1} \leq 4 n_{pe} \nu^2 \end{equation} and therefore 
\begin{equation} exp\left(-\frac{2nn_{pe}\xi^2}{n_{rk}+1}\right) \geq exp\left(-4n_{pe}\nu^2\right) \end{equation} The lemma follows.
\end{proof}

The following bound holds on the key rate of the random sampling protocols:

\begin{theorem}\label{thm:UpperBoundOnKeyRateForRandomSampling}
Fix the block size $n$, the tolerated error rate $\delta$ and the security level \begin{equation} \epsilon_{qkd}= \epsilon_{ec} +  \inf_{0<\nu<1/2-\delta} \left(\epsilon_{pa}(\nu) + \epsilon_{pe}(\nu)\right) \end{equation} Then, the key rate $n_{out}/n$ is upper bounded by the larger of $(1-2h(\delta))/2$ and  \begin{equation} (1-2h(\delta)) - c_1(\epsilon_{qkd},\delta) n^{-1/3} - c_2(\epsilon_{qkd})n^{-1} \end{equation}
where
\begin{align}
c_1(\epsilon_{qkd},\delta) &= \frac{3}{2^{5/3}} (1-2h(\delta))^{1/3} \left(\frac{1-h(\delta)}{1/2 - \delta}\right)^{2/3} \left( \ln \frac{2}{\epsilon_{qkd}}\right)^{1/3}\\
c_2(\epsilon_{qkd}) &= 3 \log_2 (1/\epsilon_{qkd}) + 3\log_2(3) -4 
\end{align}
\end{theorem}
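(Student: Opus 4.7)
The plan is to split into two cases according to the size of $n_{pe}$.

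\emph{Case 1 ($n_{pe} \geq n/2$):} Since $n_{rk} \leq n/2$, inverting the definition of $\epsilon_{pa}(\nu)$ at any $\nu$ for which $\epsilon_{pa}(\nu) \leq \epsilon_{qkd}$ and using $1 - h(\delta+\nu) - h(\delta) \leq 1 - 2h(\delta)$ gives $n_{out} \leq (n/2)(1-2h(\delta)) + 2 + 2\log_2\epsilon_{qkd} + \log_2\epsilon_{ec}$; the log terms are non-positive for the $\epsilon_{qkd}, \epsilon_{ec}<1$ of cryptographic interest, so $n_{out}/n \leq (1-2h(\delta))/2$. This gives the first branch of the ``larger of'' statement.

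\emph{Case 2 ($n_{pe} < n/2$):} Here $n_{rk} > n/2$ and Lemma \ref{lemma:LowerBoundOnLimEtAlFunction} applies. By the security condition, the infimum in the definition of $\epsilon_{qkd}$ is attained (or approached) at some $\nu^*\in(0,1/2-\delta)$ with $\epsilon_{ec} + \epsilon_{pa}(\nu^*) + \epsilon_{pe}(\nu^*) \leq \epsilon_{qkd}$, so in particular $\epsilon_{pa}(\nu^*), \epsilon_{pe}(\nu^*) \leq \epsilon_{qkd}$. From the Lemma,
\begin{equation*}
n_{pe} \geq \frac{\ln(2/\epsilon_{qkd})}{2(\nu^*)^2},
\end{equation*}
while inverting the explicit formula for $\epsilon_{pa}(\nu^*)\leq \epsilon_{qkd}$ gives
\begin{equation*}
n_{out} \leq n_{rk}\bigl(1 - h(\delta+\nu^*) - h(\delta)\bigr) + 2 + 2\log_2\epsilon_{qkd} + \log_2\epsilon_{ec}.
\end{equation*}

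By concavity of $h$ on $[0,1/2]$, the chord from $(\delta, h(\delta))$ to $(1/2, 1)$ lies below $h$, yielding $1 - h(\delta+\nu^*) - h(\delta) \leq D - E\nu^*$ with $D = 1-2h(\delta)$ and $E = (1-h(\delta))/(1/2-\delta)$. Substituting $n_{rk}=n-n_{pe}$ and the lower bound on $n_{pe}$,
\begin{equation*}
n_{out} \leq nD - \left(nE\nu^* + \frac{D\ln(2/\epsilon_{qkd})}{2(\nu^*)^2}\right) + O\bigl(1/\nu^*\bigr) + O\bigl(\log(1/\epsilon_{qkd})\bigr).
\end{equation*}
A standard calculus minimization of the bracket over $\nu^*$ at $(\nu^*)^3 = D\ln(2/\epsilon_{qkd})/(nE)$ yields an $n^{2/3}$-order subtraction whose coefficient, involving $D^{1/3}E^{2/3}(\ln(2/\epsilon_{qkd}))^{1/3}$, becomes $c_1 n^{-1/3}$ after dividing through by $n$.

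The main obstacle is the bookkeeping for the $c_2 n^{-1}$ term: the $O(1/\nu^*)$ cross term from the chord expansion (which must be bounded at the optimal $\nu^*$), the logarithmic penalties from privacy amplification and from error correction, and the optimal split of the security budget among $\epsilon_{ec}, \epsilon_{pa}(\nu^*), \epsilon_{pe}(\nu^*)$ must all be tracked simultaneously. The balanced allocation $\epsilon_{ec}=\epsilon_{pa}(\nu^*)=\epsilon_{pe}(\nu^*)=\epsilon_{qkd}/3$ is the natural choice and is the source of the $3\log_2 3$ summand in $c_2$; the calculus step itself is routine.
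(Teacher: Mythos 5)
Your overall skeleton matches the paper's: split on whether $n_{pe}$ exceeds $n/2$, invert $\epsilon_{pa}(\nu)\leq\epsilon_{qkd}$, use Lemma \ref{lemma:LowerBoundOnLimEtAlFunction} to get $n_{pe}\nu^2\geq\tfrac12\ln(2/\epsilon_{qkd})$, apply the chord bound $h(\delta+\nu)-h(\delta)\geq E\nu$ with $E=(1-h(\delta))/(1/2-\delta)$, and balance the two losses at $\nu^3\sim\ln(2/\epsilon_{qkd})/n$. However, two steps do not go through as written. First, the cross term you label $O(1/\nu^*)$ is not negligible: expanding $(n-n_{pe})(D-E\nu)$ with $D=1-2h(\delta)$ gives $nD-nE\nu-n_{pe}D+n_{pe}E\nu$, and after substituting the lower bound on $n_{pe}$ the positive leftover is $E\ln(2/\epsilon_{qkd})/(2\nu)$, which at the optimal $\nu=\Theta(n^{-1/3})$ is $\Theta(n^{1/3})$ --- an order-$n^{-2/3}$ contribution to the rate, too large to absorb into $c_2n^{-1}$ and large enough to spoil the leading coefficient. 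Indeed, minimizing your bracket $nE\nu+D\ln(2/\epsilon_{qkd})/(2\nu^2)$ alone yields the prefactor $\tfrac32 D^{1/3}E^{2/3}(\ln(2/\epsilon_{qkd}))^{1/3}$, not the theorem's $\tfrac{3}{2^{5/3}}D^{1/3}E^{2/3}(\ln(2/\epsilon_{qkd}))^{1/3}$. The paper avoids the cross term entirely by writing the loss as $n_{pe}(1-2h(\delta))+n_{rk}(h(\delta+\nu)-h(\delta))$, lower-bounding $n_{rk}\geq n/2$ only inside the second summand, and applying $a+b\geq 3a^{1/3}(b/2)^{2/3}$; the factor $(n/2)^{2/3}$ is exactly the source of the extra $2^{-2/3}$ in $c_1$.

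Second, your treatment of the security budget does not yield the stated $c_2$. In an upper-bound theorem you cannot ``choose'' the allocation $\epsilon_{ec}=\epsilon_{pa}=\epsilon_{pe}=\epsilon_{qkd}/3$; the bound must hold for every feasible $(\epsilon_{ec},\nu,n_{pe},n_{out})$. The paper eliminates $\epsilon_{ec}$ via the pointwise inequality $\epsilon_{ec}+\tfrac{1}{2\sqrt{\epsilon_{ec}}}2^{E'/2}\geq\tfrac{3}{2^{4/3}}2^{E'/3}$ (with $E'=-n_{rk}(1-h(\delta+\nu)-h(\delta))+n_{out}$), whose optimum is at $\epsilon_{pa}=2\epsilon_{ec}$, not at equal shares; this gives $E'\leq 3\log_2\epsilon_{qkd}+4-3\log_2 3$ uniformly in $\epsilon_{ec}$ and is the actual origin of the $3\log_2 3-4$ in $c_2$. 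Your route, which uses only $\epsilon_{pa}\leq\epsilon_{qkd}$ and $\epsilon_{ec}\leq\epsilon_{qkd}$, gives $E'\leq 2+3\log_2\epsilon_{qkd}$, which is weaker by roughly $2.75$ bits and therefore proves the theorem only with a strictly smaller $c_2$. Both gaps are repairable, but the repairs are precisely the two inequalities that constitute the paper's proof, so they are not ``routine bookkeeping'' on top of your outline.
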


\begin{proof}
Take the optimal $\nu$. In case $n_{rk}/n < 1/2$, then \begin{equation} \frac{n_{out}}{n} \leq \frac{n_{rk}(1-h(\delta+\nu) - h(\delta))}{n} \leq \frac{1-2h(\delta)}{2} \end{equation} Suppose now that $n_{rk}/n \geq 1/2$. Simplify the problem by eliminating $\epsilon_{ec}$: note that
\begin{multline}
\epsilon_{ec}+\epsilon_{pa}(\nu) = \epsilon_{ec} + \frac{1}{2\sqrt{\epsilon_{ec}}} 2^{(-n_{rk}(1-h(\delta+\nu)-h(\delta))+n_{out})/2} \\
\geq \frac{3}{2^{4/3}} 2^{(-n_{rk}(1-h(\delta + \nu)-h(\delta))+n_{out})/3} 
\end{multline}
with equality if and only if \begin{equation} \epsilon_{ec} = \frac{1}{2^{4/3}} 2^{(-n_{rk}(1-h(\delta+\nu)-h(\delta))+n_{out})/3} \end{equation}

Use this and Lemma \ref{lemma:LowerBoundOnLimEtAlFunction} to deduce \begin{equation}  \frac{3}{2^{4/3}} 2^{(-n_{rk}(1-h(\delta + \nu)-h(\delta))+n_{out})/3} +  2 exp(-2 n_{pe} \nu^2) \leq \epsilon_{qkd}  \end{equation} From this, deduce further:
\begin{align}
-n_{rk}(1-h(\delta+\nu)-h(\delta)) + n_{out} &\leq 3 \log_2 \epsilon_{qkd} + 4 - 3\log_2(3) \\
-2n_{pe} \nu^2 &\leq \ln(\epsilon_{qkd}) - \ln(2)
\end{align}
Rewrite the first inequality as 
\begin{equation} \label{eq:UpperBoundOnOutputSize}
n_{out} \leq n(1-2h(\delta))- n_{pe}(1-2h(\delta)) -n_{rk}(h(\delta+\nu)-h(\delta)) \\
+3 \log_2 \epsilon_{qkd} + 4 - 3\log_2(3) 
\end{equation}
Now, apply the inequality $a+b \geq 3 a^{1/3} (b/2)^{2/3}$ to the second and third term:
\begin{multline}
n_{pe}(1-2h(\delta)) + n_{rk}(h(\delta+\nu)-h(\delta)) \\ \geq \frac{3}{2^{2/3}} n_{pe}^{1/3}(1-2h(\delta))^{1/3} n_{rk}^{2/3} (h(\delta+\nu)-h(\delta))^{2/3} \\ \geq \frac{3}{2^{2/3}} n_{pe}^{1/3}(1-2h(\delta))^{1/3} \left(n/2\right)^{2/3} (h(\delta+\nu)-h(\delta))^{2/3}
\end{multline}
Further, use the line through $(\delta,h(\delta))$ and $(1/2,1)$ to obtain \begin{equation} h(\delta+\nu)-h(\delta) \geq \nu \frac{1-h(\delta)}{1/2 - \delta} \end{equation} then combine this with $n_{pe} \nu^2 \geq 0.5 \ln(2/\epsilon_{qkd})$ to obtain \begin{equation} n_{pe}^{1/3} (h(\delta+\nu)-h(\delta))^{2/3} \geq \left(\frac{1-h(\delta)}{1/2 - \delta}\right)^{2/3} \left(\frac{1}{2} \ln \frac{2}{\epsilon_{qkd}}\right)^{1/3} \end{equation}
Thus, 
\begin{multline}
n_{pe}(1-2h(\delta)) + n_{rk}(h(\delta+\nu)-h(\delta)) \\ \geq \frac{3}{2^{5/3}} (1-2h(\delta))^{1/3} \left(\frac{1-h(\delta)}{1/2 - \delta}\right)^{2/3} \left( \ln \frac{2}{\epsilon_{qkd}}\right)^{1/3}  n^{2/3} 
\end{multline}
Combining with \eqref{eq:UpperBoundOnOutputSize} proves the Theorem. 
\end{proof}

\section{Conclusion and open problems}\label{sec:Conclusion}

The present paper has proposed and proved security of a QKD protocol that uses two-universal hashing instead of random sampling to perform parameter estimation. This protocol dramatically outperforms previous QKD protocols for small block sizes. This provides a new approach in QKD use-cases such as the Micius satellite example, where the difficulty of accumulating a large enough block size makes the BBM92 protocol impractical. The quantum phase of the two-universal hashing protocol is also impractical with current technology. However, it appears easier to make a moderate advance in ground stations and a moderate advance in the transmission of entangled photon pairs from space to earth, rather than to put the entire burden on only one of these approaches. 

The first group of open problems are related to the quantum phase of the two-universal hashing protocol. On the theoretical side, what are the fundamental trade-offs between the complexity of the quantum phase of a QKD protocol and its performance? Can good performance be achieved using a simpler quantum phase than the two-universal hashing protocol? Further, can QKD hardware be developed capable of more than just single qubit measurements? 

Second, the algorithm given in section \ref{sec:ComputingCertainFunctions} for computing the function $\functiong$ is not efficient. This leads to the following open problem: is there a probability distribution over CSS codes, such that the marginal distributions of the two parity check matrices satisfy a two-universal hashing condition with some good collision probability bound, and such that each of the two parity check matrices has additional structure that allows efficient computation of $\functiong$ during the protocol? There is a long history in information theory of approximating the performance of random codes with brute force decoding by more structured codes with efficient decoding, so there is reason to hope that the same can be done in the present case. 

Third, the arguments in the present paper are for the case where Alice and Bob can apply perfect quantum operations. It thus remains an open problem to generalize the present security proof to the case of imperfect devices.

\end{document}